\documentclass[12pt]{article} % For LaTeX2e
\usepackage[colorlinks, citecolor={blue}]{hyperref}
\usepackage{url}
\usepackage{amsfonts,amscd,amssymb}
\usepackage{amsthm,amsmath,natbib}
\usepackage{algorithmic,algorithm}
\usepackage{bm}
\usepackage{bbm} %bb font numbers
\usepackage{color}
\usepackage{verbatim}
\usepackage{graphicx}
\usepackage{setspace}
\usepackage{natbib}
\usepackage[margin=1in]{geometry}
\usepackage{enumitem}
\usepackage[textsize=tiny]{todonotes}

\usepackage{tikz}
\usetikzlibrary{matrix}
\usetikzlibrary{backgrounds}
\usetikzlibrary{calc}
\usetikzlibrary{arrows,shapes}
\usetikzlibrary{decorations.pathmorphing}

\doublespacing

\newtheorem{thm}{Theorem}
\newtheorem{lemma}{Lemma}

\newtheorem{cor}{Corollary}

\newtheorem{mydef}{Definition}
\newtheorem{assumption}{Assumption}

\newcommand{\mb}{\mathbf}
\newcommand{\mbb}{\mathbb}
\newcommand{\mc}{\mathcal}

\definecolor{trevorblue}{rgb}{0.330, 0.484, 0.828}
\definecolor{trevoryellow}{rgb}{0.829, 0.680, 0.306}

\newcommand{\comp}{\mc{C}}
\newcommand{\rate}{\mu}
\newcommand{\ratemat}{\mathbf{R}}  % \mc{M} is already used for "model"
\newcommand{\map}{\mc{A}}
\newcommand{\pars}{\mathbf \theta}
\newcommand{\upper}{\mc{U}}

\title{Direct likelihood-based inference for discretely observed stochastic compartmental models of infectious disease}
\date{}
\author{
Lam Si Tung Ho \\
Department of Biostatistics \\
University of California, Los Angeles
\and
Forrest W.~Crawford \\
Departments of Biostatistics and Ecology \& Evolutionary Biology \\
Yale University
\and
Marc A.~Suchard \\
Departments of Biomathematics, Biostatistics and Human Genetics \\
University of California, Los Angeles
}

\begin{document}
\maketitle

\clearpage

\begin{abstract}
Stochastic compartmental models are important tools for understanding the course of infectious diseases epidemics in populations and in prospective evaluation of intervention policies.
However, calculating the likelihood for discretely observed data from even simple models -- such as the ubiquitous susceptible-infectious-removed (SIR) model -- has been considered computationally intractable, since its formulation almost a century ago.
Recently researchers have proposed methods to circumvent this limitation through data augmentation or approximation, but these approaches often suffer from high computational cost or loss of accuracy.
We develop the mathematical foundation and an efficient algorithm to compute the likelihood for discretely observed data from a broad class of stochastic compartmental models.
%The computational complexity scales polynomially with the changes in population sizes between observations.
%We achieve this through a re-parameterization of the stochastic compartmental model into a multivariate coupled birth process and identify a convenient recursion to express the Laplace transform of the finite-time transition probabilities.
We also give expressions for the derivatives of the transition probabilities using the same technique, making possible inference via Hamiltonian Monte Carlo (HMC).
We use the 17th century plague in Eyam, a classic example of the SIR model, to compare our recursion method to sequential Monte Carlo, analyze using HMC, and assess the model assumptions.
We also apply our direct likelihood evaluation to perform Bayesian inference for the 2014-2015 Ebola outbreak in Guinea.
%under a hierarchical and time-inhomogeneous extension of the SIR model.
The results suggest that the epidemic infectious rates have decreased since October 2014 in the Southeast region of Guinea, while rates remain  the same in other regions, facilitating understanding of the outbreak and the effectiveness of Ebola control interventions.

\paragraph{Keywords} epidemic model, multivariate birth process, infectious disease, transition probabilities, Ebola
\end{abstract}

\clearpage

\section{Introduction}
Compartmental models have been used extensively in epidemiology to study the spread of infectious diseases such as plague \citep{raggett1982}, measles \citep{cauchemez2008likelihood}, influenza \citep{dukic2012tracking}, HIV \citep{blum2010hiv}, and Ebola \citep{althaus2014estimating}.
These models stratify the population into separate groups according to differing health states.
The famous susceptible-infectious-removed (SIR) model \citep{mckendrick1926applications, Kermack1927} divides the population into three subpopulations: the susceptible (S) group including healthy persons who have no immunity to the disease, the infectious (I) group including infected persons who can transmit the disease to susceptible persons by contact, and the removed (R) group including recovered/dead persons who no longer affect disease dynamics.
Important adaptions of the SIR model abound.
For example, allowing for the loss of immunity in the removed group such that recovered persons can become susceptible again results in the susceptible-infectious-removed-susceptible (SIRS) model.  As a simplification, the susceptible-infectious-susceptible (SIS) model assumes that individuals who recover from the disease have no immunity against reinfection, thus rejoin susceptible group immediately after recovery.
%So, this model consists of only two compartments.
%On the other hand,
The more complicated susceptible-exposed-infectious-removed (SEIR) model takes into account an incubation period by adding an exposed (E) group including individuals who are infected but not yet infectious.

Compartmental models have been studied in both deterministic and stochastic settings.
One advantage of deterministic models is that they yield simpler statistical inference than their stochastic counterparts.
 However, ``many infectious disease systems are fundamentally individual-based stochastic processes, and are more naturally described by stochastic models'' \citep{roberts2015nine}.
Deterministic models are only appropriate when the populations of the compartments are sufficiently large \citep{brauer2008compartmental}.
Therefore, stochastic models remain preferable when their analysis is possible. If we are able to observe all transition events, likelihood-based inference for stochastic compartment models is straightforward.
For example, \citet{becker1999statistical} derive maximum likelihood estimates under complete observation for the SIR model. Unfortunately, it is very unlikely that we know exactly when an individual contracts the disease.
In general, surveillance data often include total counts of individuals in each compartment at several observation points.
Calculation of the likelihood requires evaluating the transition probabilities of the underlying stochastic process between these time points and, thus, becomes intractable due to the requirement of integrating over all unobserved events \citep{cauchemez2008likelihood}.
Solving for the transition probabilities begins, as \cite{renshaw2011} reminds us, by innocuously writing out the Chapman-Kolmogorov equations for the compartmental model, but the ``associated mathematical manipulations required to generate solutions can only be described as heroic.''

One common solution considers stochastic compartmental models as finite, but very large, state-space Markov processes and approximates their transition probabilities using matrix exponentiation.
Unfortunately, this method is extremely time consuming and numerically unstable in many instances \citep{schranz2008pathological, crawford2012}.
Further, when the state-space is infinite, matrix exponentiation can suffer from truncation error \citep{crawford2016coupling}.
Several alternative approaches have been developed to overcome the intractability of compartmental models, including data augmentation, diffusion approximation, sequential Monte Carlo (SMC)  -- namely, particle filters -- and approximate Bayesian computation (ABC).
However, these methods are limited and do not completely achieve tractability.
In Section \ref{sec:SCM}, we give a formal definition of stochastic compartmental models and discuss limitations of existing methods in more detail.

In this paper, we propose a
%computationally efficient
method with polynomial complexity
to compute the transition probabilities and their derivatives for stochastic compartmental models, making direct inference scalable to large epidemics. The main
%\todo{since JASA is a statistics journal, need to be very careful about the word ``efficiency''}
%idea
technique
of our method is solving the Chapman-Kolmogorov equations in the Laplace domain and evaluating the inverse Laplace transform of these solutions numerically to get back the transition probabilities.
Recently, this
%idea
technique
has been successfully applied to the SIS model \citep{crawford2012} and the SIR model \citep{ho2016}, where the solutions of the Chapman-Kolmogorov equations in the Laplace domain can be represented by continued fractions.
Although these results make progress toward evaluating the likelihood function efficiently, applying the continued fraction representation for more complex models such as SEIR and SIRS remains an open problem.
In this work, we bypass the need for an exotic continued fraction representation by constructing multivariate birth processes that are equivalent to epidemic processes of the compartmental models.
Consequently, our method does not require evaluating continued fractions, and is therefore significantly faster and straightforward to apply to complex compartmental models.
Section \ref{sec:transprob} explains the construction of multivariate birth process representations and the dynamic programming algorithm for computing the transition probabilities of compartmental models.
In Section \ref{sec:epimodels}, we apply this new method to three prevailing infectious disease models (SIR, SEIR, and SIRS) and illustrate the computation gain for the SIR model compared to the method in \citet{ho2016}, the SMC method implemented in the increasingly popular \texttt{R} package \texttt{pomp} \citep{king2015}, and the matrix exponentiation method implemented in the state-of-the-art software \texttt{Expokit} \citep{sidje1998}.
We discuss two
%additional applications
%
further statistical applications using our recursion
%
%two further uses of
%
which do not appear
%advantageous for previous methods
possible under previous approaches
in Section \ref{sec:addapp}.
Specifically, we devise polynomial-time computable derivatives of the transition probabilities of the SIR model, enabling an analysis of the dynamics of an historical plague outbreak using Hamiltonian Monte Carlo (HMC).
Further, the generality of our method equips us to explore the adequacy of the SIR model assumptions for this outbreak of plague.
Finally, in Section \ref{sec:Ebola}, we turn to the 2014-2015 Ebola outbreak in Guinea and propose a time-inhomogeneous, hierarchical SIR extension that provides evidence for the slowing of this outbreak.  Moreover, we find that the change in the trajectory only happened in the Southeast region of Guinea.
%Finally, we give some discussions in Section \ref{sec:dis}.

%%%%%%%%%%%%%%%
%%%%%%%%%%%%%%%

\section{Stochastic compartmental models}
\label{sec:SCM}

In this section, we formally define stochastic compartmental models, discuss limitations of current inference methods when the data are observed discretely, and propose a new method of polynomial complexity for computing their transition probabilities.

%%%%%%%%%%%%%%%

\subsection{Notation and definition}

A stochastic $m$-compartmental model stratifies the population into $m$ homogeneous subpopulations called compartments.
Let
%$\{ \comp_i \}_{i=1}^m$  % Different notation than used later
$\{ \comp_1, \comp_2, \ldots, \comp_m \}$
be the compartments and $\mb{Y}(t) = \{ Y_1(t),Y_2(t),\ldots,Y_m(t) \}$
%\{ Y_i(t) \}_{i=1}^m
be their population at time $t \ge 0$, then
the rate matrix  $\ratemat$ is an $m \times m$ matrix $[\rate_{ij}(\theta, \mb{Y})]_{1 \leq i,j \leq m}$ where $\rate_{ij}(\theta, \mb{Y}) \geq 0$ is a function of the parameter of interest $\pars$ and $\mb{Y}(t)$, representing an infinitesimal transition rate from $\comp_i$ to $\comp_j$.
We set $\rate_{ii}(\theta, \mb{Y}) = 0$ for all $i = 1,\ldots,m$.
Let $d$ count the number of positive elements of $\ratemat$.
Then, there are $d$ possible transitions of $\mb{Y}$ during a sufficient small time interval $(t, t+dt)$:
\begin{equation}
\begin{aligned}
 \Pr \left \{ \mb{Y}(t + dt) = \mb{y} - \mb{e}_i + \mb{e}_j ~|~ \mb{Y}(t) =\mb{y} \right \} & = \rate_{ij}(\theta, \mb{y}) dt + o(dt),~~\mu_{ij} \ne 0  \\
 \Pr \left \{ \mb{Y}(t + dt) = \mb{y}  ~|~ \mb{Y}(t) =\mb{y} \right \} & = 1 - \left ( \sum_{i,j=1}^m{ \rate_{ij}(\theta, \mb{y})} \right )dt + o(dt),
\end{aligned}
\label{eqn:bb}
\end{equation}
where $\mb{e}_i$ and $\mb{e}_j$ are the $i^{\mbox{\tiny th}}$ and $j^{\mbox{\tiny th}}$ coordinate vector of $\mathbb{R}^m$ respectively.
We call $\mb{Y}(t)$ a compartmental process.
We can visualize a compartmental model by a directed graph where nodes correspond to compartments and a directed edge from node $i$ to node $j$ means $\mu_{ij}$ is positive.
Figure \ref{fig:excomp} gives an example of representing a 3-compartmental model by a directed graph.
%\todo{MAS: I removed the rest of this paragraph because it seems repeat the Introduction.}
%Stochastic compartmental models have been used extensively to model the dynamic of infectious diseases where each compartment corresponds to a health state.
%However, likelihood calculation for these models are intractable because we often only be able to observe the epidemic discretely.
%In the last two decades, researchers have developed several approaches to confront the intractable likelihood problem for stochastic compartmental models.
%Unfortunately, these methods are limited and we discuss their limitations next.

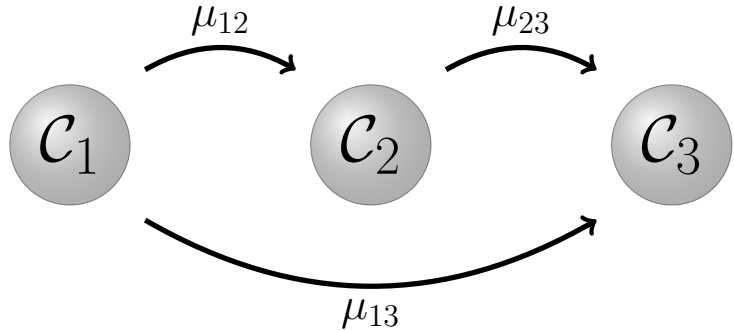
\begin{figure}[h]
\centering
\begin{tikzpicture}
%
% Compartments
%
\shadedraw[ball color=black!50, opacity=0.4] (0, 0) circle (0.8) node[opacity=1.0] {\Huge $\mc{C}_1$};
\shadedraw[ball color=black!50, opacity=0.4] (4, 0) circle (0.8) node[opacity=1.0] {\Huge $\mc{C}_2$};
\shadedraw[ball color=black!50, opacity=0.4] (8, 0) circle (0.8) node[opacity=1.0] {\Huge $\mc{C}_3$};
%
% Links
%
\draw[bend left,->, line width=2pt] (1, 1) to node[above] {{\Large $\mu_{12} $}} (3, 1);
\draw[bend left, ->, line width=2pt] (5, 1) to node[above] {{\Large $\mu_{23}$}} (7, 1);
\draw[bend right, ->, line width=2pt] (1, -1) to node[below] {{\Large $\mu_{13}$}} (7, -1);
\end{tikzpicture}
\caption{A directed graph representation of a 3-compartmental model. The rate matrix $\ratemat$ of this model only has $d = 3$ positive elements: $\mu_{12}$, $\mu_{23}$, and $\mu_{13}$.}
\label{fig:excomp}
\end{figure}

%%%%%%%%%%%%%%%

\subsection{Limitations of current approaches}

%\todo{Add Owen, Wilkinson and Gillespie, 2014. Lam: This paper is Owen et al. 2015}
%\todo{Nominclature from \url{http://www.jstor.org/stable/2529945?seq=1\#page_scan_tab_contents}}
%\todo{\url{http://dl.acm.org/citation.cfm?id=599426} or \url{http://link.springer.com/article/10.1023\%2FA\%3A1011973120681}}

The first approach for likelihood-based inference under discretely-observed stochastic compartmental models exploits data augmentation.
This technique augments the observed data with the extensive unobserved information needed to evaluate the continuously-observed likelihood.
This method often treats the times of all unobserved events as parameters and explores the joint posterior distribution by Markov chain Monte Carlo (MCMC) method \citep{gibson1998estimating, o1999bayesian, o2002tutorial}.
Although data augmentation works well for small epidemics, it has been criticized for being computationally prohibitive with large augmented data \citep{cauchemez2008likelihood, blum2010hiv}.

An alternative approach to data augmentation entertains a diffusion approximation.
This method approximates the discrete compartmental processes by continuous diffusion processes whose likelihood function is easy to calculate.
For example, \citet{cauchemez2008likelihood} propose to mimic the SIR process by a Cox-Ingersoll-Ross process \citep{cox1985theory}, and apply this approximation to study measles epidemics in London (1948-1964).
However, a diffusion approximation is not applicable to epidemics in small communities because the approximation requires the state-space to be large enough to justify approximating a discrete process by a continuous one \citep{karev2005modeling, golightly2005bayesian}.
Moreover, this method is often not sufficiently accurate for use even as a simulator \citep{golightly2005bayesian}.

Particle filters, as a SMC approach, offer another popular tool for estimating the likelihood of stochastic models  \citep{arulampalam2002tutorial}.
The \texttt{R} package \texttt{pomp} \citep{king2015} provides an increasingly popular SMC implementation for both frequentist and Bayesian inference settings.
For example, \citet{ionides2006inference} develop an iterated filtering method that uses a particle filter to approximate the maximum likelihood estimates of the parameters.
In the Bayesian setting, \citet{andrieu2010particle} construct a particle marginal Metropolis-Hastings sampler to explore the posterior distribution using estimates from a particle filter.
The computational cost of these methods can be prohibitive when the convergence is slow because each iteration requires using a particle filter to estimate the likelihood \citep{owen2015scalable}.

Another alternative to data augmentation is ABC \citep{blum2010hiv}.
This is a likelihood-free approach replacing the observations with summary statistics and approximating the posterior of the parameters given the summary statistics by a simulation-based method.
Nonetheless, the ABC method can be biased because of non-zero tolerance and non-sufficient summary statistics \citep{sunnaaker2013approximate}, especially in high dimensions \citep{blum2010hiv}.
Therefore, credible interval estimates tend to be inflated \citep{csillery2010approximate}, and model selection using the ABC method cannot be trusted \citep{robert2011lack}.

%Hybrid methods also exist.
%\citet{owen2015scalable} propose to parallelize the particle marginal Metropolis-Hastings sampler using ABC.
%The main idea is using ABC to initialize $n$ starting points for $n$ independent parallel particle marginal Metropolis-Hastings chains.
%By using these ``good" starting points, the authors expect that these chains can quickly approach the posterior distribution.
%However, even perfect parallelism reduces computation time only linearly in the number of parallel tasks.
% and continues to fall short.
%Although this hybrid can reduce the overall computation time through parallelism, it still falls short in addressing the intractability problem of discretely observed data.

Finally, \citet{faddy1977stochastic} proposes an approximation for the stochastic SIR model by assuming that each susceptible person becomes infected independently with the same rate $\beta \times i(t)$ where $i(t)$ is the number of infected individuals in the deterministic SIR model \citep{Kermack1927}.
The transition probabilities of this approximated process have analytic formulae because of the independence assumption, but this approximation becomes less accurate as the epidemic progresses.
%The error comes from the difference in the dynamic of the infected population between deterministic and stochastic model.

%%%%%%%%%%%%%%%
%%%%%%%%%%%%%%%

\section{Evaluating transition probabilities}
\label{sec:transprob}

We present a new method for computing the transition probabilities of stochastic compartmental models.
Our method achieves polynomial complexity, thus enabling direct likelihood-based inference for discretely observed data.
The main idea is to recast a compartmental process whose rate matrix $\ratemat$ has $d$ positive elements into a $d$-dimensional birth process by keeping track of $d$ types of transition events between compartments.
This idea has been used in chemical thermodynamics for almost 100 years, where the variable measuring the progress of all substances in a chemical reaction is called the \emph{degree of advancement} or \emph{extent of reaction} variable \citep{dedonder1920lecons}.
By doing this, we can evaluate the transition probabilities more efficiently because the resulting multivariate birth processes are monotonically non-decreasing, while the compartment populations may increase or decrease over time.
This monotonicity affords us the opportunity to apply dynamic programming for building the transition probability matrix.

%we *recast*
%
%In this section, we use multivariate birth processes to model events of several stochastic compartmental models. We assume that the population is closed, meaning the total population remains constant over time. The main idea is keeping track of the number of events instead of the population of compartments.

%%%%%%%%%%%%%%%
\subsection{Multivariate birth process}

\begin{mydef}
A $d$-dimensional birth process is a continuous-time Markov process counting the number of ``birth'' events for $d$ populations. Let $\mb{X}(t) = \{ X_1(t), X_2(t), \ldots, X_d(t) \},~t \geq 0$ be a multivariate birth process, whose state-space is $\mathbb{N}^d$. Then, there are $d + 1$ possible transitions of $\mb{X}$ during a sufficiently small time interval $(t,t+dt)$:
\begin{equation}
\begin{aligned}
 \Pr \left \{ \mb{X}(t + dt) = \mb{x} + \mb{e}_k ~|~ \mb{X}(t) =\mb{x} \right \} & = \lambda^{(k)}_{\mb{x}} dt + o(dt),~k \in \{ 1, 2, \ldots, d\}  \\
 \Pr \left \{ \mb{X}(t + dt) = \mb{x}  ~|~ \mb{X}(t) =\mb{x} \right \} & = 1 - \left ( \sum_{k=1}^d{\lambda^{(k)}_{\mb{x}}} \right )dt + o(dt),
\end{aligned}
\label{eqn:mbp}
\end{equation}
where
%$\mb{e}_k$ is the $k^{\mbox{\tiny th}}$ coordinate vector of $\mathbb{R}^d$, and
$\lambda^{(k)}_{\mb{x}} \geq 0$ is the birth rate of the $k^{\mbox{\tiny th}}$ population given the current population is $\mb{x} = (x_1, x_2, \ldots, x_d)$.
\end{mydef}

For two vectors $\mb{u}, \mb{v} \in \mathbb{N}^d$, denote $P_{\mb{u} \mb{v}}(t) = \Pr \{\mb{X}(t)= \mb{v}~ |~\mb{X}(0)=\mb{u}\}$ be the transition probability of the multivariate birth process from $\mb{u}$ to $\mb{v}$ after $t$ units of time.
%\todo{I like this $P_{\mb{u} \mb{v}}(t)$ notation and I think you should use it below instead of the cumbersome braces with stacked u and v}
We say $\mb{u} \leq \mb{v}$ if $u_k \leq v_k$ for every $k = 1, 2, \ldots, d$. Notice that $P_{\mb{u} \mb{v}}(t) \ne 0$ if and only if $\mb{u} \leq \mb{v}$.

Let $\mb{B} \in \mathbb{N}^d$, and set $\lambda^{(k)}_{\mb{x}} = 0$ if $x_k = -1$. For $i \in \mathbb{N}$, we denote
\begin{equation}
D_i = \left \{ \mb{x}: \sum_{k=1}^d{x_k} = i \right \},~~\text{and}~~ \lambda_i = \max_{\mb{x} \in D_i} \left \{ \sum_{k=1}^d{\lambda^{(k)}_{\mb{x}}} \right \}.
\end{equation}
Throughout this section, we make the following assumption:
\begin{assumption}[Regularity condition]
\[
\sum_{i=1}^\infty {1/\lambda_i} = \infty.
\]
\label{sump:regcond}
\end{assumption}
\noindent
This condition generalizes the classic regularity condition of a univariate birth process \citep{feller1968introduction}.

%\todo{It seems like $D_i$ is not used until the proof.  I would move its definition down. Lam: $D_i$ is used in the definition of $\lambda_i$}

\begin{thm}
Under Assumption \ref{sump:regcond} (Regularity condition),
\begin{enumerate}[label=(\roman*)]

	\item the forward transition probabilities $\{ P_{\mb{0} \mb{x}}(t) \}_{\mb{x} \leq \mb{B}}$ are the unique solution of the Chapman-Kolmogorov forward equations
\begin{align}
\frac{dP_{\mb{0} \mb{x}}(t)}{dt} =  \sum_{k=1}^d {\lambda^{(k)}_{\mb{x} - \mb{e}_k} P_{\mb{0}, \mb{x} - \mb{e}_k}(t)}  - \left ( \sum_{k=1}^d {\lambda^{(k)}_{\mb{x}}} \right )P_{\mb{0} \mb{x}}(t), \text{ and}
\label{eqn:forward_eq}
\end{align}

	\item the backward transition probabilities $\{ P_{\mb{x} \mb{B}}(t) \}_{\mb{x} \leq \mb{B}}$ are the unique solution of the Chapman-Kolmogorov backward equations
\begin{align}
\frac{dP_{\mb{x} \mb{B}}(t)}{dt} =  \sum_{k=1}^d {\lambda^{(k)}_{\mb{x}} P_{\mb{x} + \mb{e}_k, \mb{B}}(t)}  - \left ( \sum_{k=1}^d {\lambda^{(k)}_{\mb{x}}} \right )P_{\mb{x} \mb{B}%, \mb{x}_0%
}(t).
\label{eqn:backward_eq}
\end{align}

\end{enumerate}
\label{thm:CKeq}
\end{thm}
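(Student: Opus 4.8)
The plan is to realize $\mb{X}(t)$ as the minimal pure-jump Markov process with the prescribed rates, to use Assumption~\ref{sump:regcond} together with the \emph{monotone} level structure of a birth process to rule out explosion, and then to transfer the classical Kolmogorov--Feller theory to the finite ``boxes'' $\{\mb{x}\le\mb{B}\}$, where monotonicity makes both ODE systems closed and finite-dimensional, so that uniqueness becomes elementary.

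First I would construct the process from its embedded jump chain $\{\mb{Z}_n\}$ on $\mathbb{N}^d$, with one-step probabilities $\lambda^{(k)}_{\mb{z}}/\Lambda_{\mb{z}}$ (writing $\Lambda_{\mb{z}}=\sum_{k=1}^d\lambda^{(k)}_{\mb{z}}$, with $\mb{z}$ absorbing when $\Lambda_{\mb{z}}=0$) and conditionally independent $\mathrm{Exp}(\Lambda_{\mb{Z}_n})$ holding times $W_n$; set $\zeta=\sum_{n\ge0}W_n$. The central structural observation is that every jump adds some $\mb{e}_k$ and hence increases $\sum_k X_k(t)$ by exactly one, so the process visits the level sets $D_0,D_1,D_2,\dots$ in order and spends precisely one holding time in level $i$, namely an $\mathrm{Exp}(\Lambda_{\mb{Z}_i})$ variable with rate $\Lambda_{\mb{Z}_i}\le\lambda_i$. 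Dominating each $W_i$ from below by an independent $\tilde W_i\sim\mathrm{Exp}(\lambda_i)$ via a common-uniform coupling yields $\zeta\ge\sum_i\tilde W_i$ pathwise; since $\mathbb{E}\!\left[e^{-\sum_i\tilde W_i}\right]=\prod_i\lambda_i/(\lambda_i+1)$ vanishes exactly when $\sum_i1/\lambda_i=\infty$, Assumption~\ref{sump:regcond} forces $\zeta=\infty$ almost surely from every starting state (the degenerate case $\lambda_i=0$ being immediate, as the chain then cannot leave level $i$). Consequently the process is honest and $\{P_{\mb{u}\mb{v}}(t)\}$ is unambiguously defined.

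With non-explosion in hand, I would verify the two Chapman--Kolmogorov systems for the transition probabilities themselves. Decomposing on the time $s$ and type $k$ of the last jump before $t$ gives, for $\mb{x}\ne\mb{0}$, the Volterra identity $P_{\mb{0}\mb{x}}(t)=\sum_{k=1}^d\lambda^{(k)}_{\mb{x}-\mb{e}_k}\int_0^t P_{\mb{0},\mb{x}-\mb{e}_k}(s)\,e^{-\Lambda_{\mb{x}}(t-s)}\,ds$ (and $P_{\mb{0}\mb{0}}(t)=e^{-\Lambda_{\mb{0}}t}$); differentiating in $t$ produces \eqref{eqn:forward_eq}, and conditioning instead on the first jump out of $\mb{x}$ produces \eqref{eqn:backward_eq}. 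Restricting to $\{\mb{x}\le\mb{B}\}$ is where the birth-process reformulation pays off: if $\mb{x}\le\mb{B}$ then $\mb{x}-\mb{e}_k\le\mb{B}$, so the forward system is closed on the box; and $P_{\mb{x}+\mb{e}_k,\mb{B}}(t)=0$ whenever $\mb{x}+\mb{e}_k\not\le\mb{B}$, so the backward system is closed on the box as well. Each restricted system is then a finite, linear, constant-coefficient ODE with initial condition $P_{\mb{0}\mb{x}}(0)=\delta_{\mb{x},\mb{0}}$, respectively $P_{\mb{x}\mb{B}}(0)=\delta_{\mb{x},\mb{B}}$, so Picard--Lindel\"of delivers a unique solution; combined with the preceding paragraph, $\{P_{\mb{0}\mb{x}}(t)\}_{\mb{x}\le\mb{B}}$ and $\{P_{\mb{x}\mb{B}}(t)\}_{\mb{x}\le\mb{B}}$ are exactly those unique solutions. (Backward uniqueness can alternatively be obtained directly by iterating the integral form and bounding the remainder by $\Pr(\ge n\text{ jumps in }[0,t])$, which tends to $0$ precisely by non-explosion.)

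The hard part is the non-explosion step: turning the analytic series condition $\sum_i1/\lambda_i=\infty$ into the probabilistic statement $\zeta=\infty$ a.s. This is exactly where the monotone level structure is essential---it is what makes ``the holding time in level $i$'' well defined---and where a stochastic-domination/coupling argument is needed to neutralize the path-dependence of the holding-time rates $\Lambda_{\mb{Z}_i}$, after which the classical three-series (equivalently, Laplace-transform) criterion for divergence of a sum of independent exponentials closes the gap. Everything downstream---the Volterra decompositions, the box reduction, and the appeal to uniqueness for finite linear ODE systems---I expect to be routine.
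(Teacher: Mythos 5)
Your proof is correct, and it reaches the theorem by a genuinely different route than the paper. The paper reduces everything to showing that the rates uniquely determine the process and then invokes Theorem 7 of Reuter (1957): it suffices to show that the only $[0,1]$-valued solution of $\bigl(\zeta+\sum_k\lambda^{(k)}_{\mb{x}}\bigr)y_{\mb{x}}=\sum_k\lambda^{(k)}_{\mb{x}}y_{\mb{x}+\mb{e}_k}$ is $y\equiv 0$, which it proves by maximizing over the level sets $D_i$ to obtain $\zeta y_i\le\lambda_i(y_{i+1}-y_i)$, so that a nonzero $y_{i_0}$ would force $y_i\ge y_{i_0}+\zeta\sum_{j=i_0}^{i-1}1/\lambda_j\to\infty$, contradicting boundedness. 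You instead establish the equivalent fact---almost-sure non-explosion of the minimal process---probabilistically, by coupling the single holding time spent in each level $D_i$ with an independent $\mathrm{Exp}(\lambda_i)$ lower bound and applying the Laplace-transform divergence criterion. Both arguments pivot on exactly the same structural observation (the monotone level structure and the bound $\sum_k\lambda^{(k)}_{\mb{x}}\le\lambda_i$ on $D_i$) and both convert $\sum_i 1/\lambda_i=\infty$ into uniqueness of the process; Reuter's resolvent criterion and non-explosion are two faces of the same theorem. What your route buys: it is self-contained, and it makes explicit two steps the paper leaves implicit---that the transition probabilities actually satisfy \eqref{eqn:forward_eq} and \eqref{eqn:backward_eq} (via your last-jump and first-jump Volterra decompositions), and that uniqueness on the box $\{\mb{x}\le\mb{B}\}$ is then elementary because monotonicity closes both systems into finite linear ODEs with prescribed initial conditions. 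What the paper's route buys is brevity, since Reuter's criterion sidesteps constructing the process and verifying the equations altogether.
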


\begin{proof}
% We see that the multivariate birth processes are conservative Q-processes (Reuter 1957). So, if the birth rates uniquely determine the process, than the Backward equations have unique solution. Moreover, the process is honest. Note that the solution of the forward equations is always a Q-process. Therefore, the forward equations also have unique solution (see Ephraim 2012 for similar argument).
It is sufficient to prove that the birth rates satisfying Assumption \ref{sump:regcond} uniquely determine the multivariate birth process. By Theorem 7 in \citet{reuter1957}, we have to show that if for some $\zeta >0$, $\{ y_\mb{x} \} \in [0,1]$ satisfies the following equations
\begin{equation}
	\left (\zeta + \sum_{k=1}^d {\lambda^{(k)}_{\mb{x}}} \right ) y_\mb{x}  = \sum_{k=1}^d {\lambda^{(k)}_{\mb{x}} y_{\mb{x} + \mb{e}_k}},
\end{equation}
then $y_\mb{x} = 0$. Let $y_i = \max_{\mb{x} \in D_i} \left \{ y_{\mb{x}} \right \}$ and $\mb{x}^* = \text{argmax}_{\mb{x} \in D_i} \left \{ y_{\mb{x}} \right \}$, we have
%\todo{What is $\zeta$? Lam: $\zeta$ is any number satisfying equation (7). I now clarify it right before (7).}
\begin{equation}
	\left (\zeta + \sum_{k=1}^d {\lambda^{(k)}_{\mb{x}^*}} \right ) y_i  = \sum_{k=1}^d {\lambda^{(k)}_{\mb{x}^*} y_{\mb{x}^* + \mb{e}_k}} \leq \left ( \sum_{k=1}^d {\lambda^{(k)}_{\mb{x}^*}} \right ) y_{i + 1}.
\end{equation}
Therefore,
\begin{equation}
	\zeta y_i \leq \left ( \sum_{k=1}^d {\lambda^{(k)}_{\mb{x}^*}} \right ) (y_{i + 1} - y_i) \leq \lambda_i (y_{i + 1} - y_i).
	\label{eqn:diffy}
\end{equation}
Assume that there exists $i_0 > 0$ such that $y_{i_0} > 0$. From \eqref{eqn:diffy}, we conclude that for every $i > i_0$, $y_i > y_{i-1}$ and
\begin{equation}
	y_i = \sum_{j = i_0}^{i-1}{\frac{\zeta}{\lambda_j}} + y_{i_0} \to \infty ~~\text{as}~~ i \to \infty,
\end{equation}
which contradicts with $y_i \leq 1$. This contradiction completes the proof.
\end{proof}

Theorem \ref{thm:CKeq} shows that we can evaluate the forward and backward transition probabilities by solving the Chapman-Kolmogorov equations \eqref{eqn:forward_eq} and \eqref{eqn:backward_eq}. However, traditional methods like matrix exponentiation and Euler's method are either computationally expensive or lack numerical accuracy.
%\todo{highlight by collary}
Instead, we first solve the Chapman-Kolmogorov equations in the Laplace domain and then apply an inverse Laplace transform to recover $P_{\mb{u} \mb{v}}(t)$.

We define the Laplace transform of $P_{\mb{u} \mb{v}}(t)$ as:
\begin{equation}
f_{\mb{u} \mb{v}}(s) = \mathcal{L}[P_{\mb{u} \mb{v}}(t)](s) = \int_0^\infty{e^{-st}P_{\mb{u} \mb{v}}(t)dt}.
\end{equation}
Note that  $f_{\mb{u} \mb{v}} \ne 0$ if and only if $\mb{u} \leq \mb{v}$.

\begin{cor}
For the multivariate birth process, we have the following recursive formulae:
\begin{equation}
\begin{aligned}
f_{\mb{0} \mb{0}}(s) &= \frac{1}{s +  \sum_{j=1}^d {\lambda^{(j)}_{\mb{0}}}} \\
f_{\mb{B} \mb{B}}(s) &= \frac{1}{s +  \sum_{j=1}^d {\lambda^{(j)}_{\mb{B}}}} \\
f_{\mb{0} \mb{x}}(s) &= \sum_{k=1}^d{\frac{\lambda^{(k)}_{\mb{x} - \mb{e}_k}}{s + \sum_{j=1}^d {\lambda^{(j)}_{\mb{x}}}}} f_{\mb{0}, \mb{x} - \mb{e}_k}(s) \\
f_{\mb{x} \mb{B}}(s) &= \sum_{k=1}^d{\frac{\lambda^{(k)}_{\mb{x}}}{s + \sum_{j=1}^d {\lambda^{(j)}_{\mb{x}}}}} f_{\mb{x} + \mb{e}_k, \mb{B}}(s),
\end{aligned}
\label{eqn:recursive}
\end{equation}
where $\mb{0} \leq \mb{x} \leq \mb{B}$.
\label{cor:recursive}
\end{cor}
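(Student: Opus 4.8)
The plan is to derive all four identities of \eqref{eqn:recursive} by applying the Laplace transform to the Chapman--Kolmogorov equations \eqref{eqn:forward_eq} and \eqref{eqn:backward_eq} supplied by Theorem \ref{thm:CKeq} and then solving algebraically for the transform of interest. Two preliminary observations make this routine. First, because $0 \le P_{\mb{u}\mb{v}}(t) \le 1$ for every $t \ge 0$, the integral defining $f_{\mb{u}\mb{v}}(s)$ converges absolutely for each $s > 0$, so every transform appearing in the statement is finite. Second, Theorem \ref{thm:CKeq} guarantees that, for a fixed $\mb{B}$, the finitely many probabilities $\{P_{\mb{0}\mb{x}}(t)\}_{\mb{x} \le \mb{B}}$ (respectively $\{P_{\mb{x}\mb{B}}(t)\}_{\mb{x} \le \mb{B}}$) form a continuously differentiable solution of a finite linear ODE system, so $e^{-st}P_{\mb{u}\mb{v}}(t) \to 0$ as $t \to \infty$ for $s > 0$, and the elementary rule $\mathcal{L}[\tfrac{d}{dt}P_{\mb{u}\mb{v}}(t)](s) = s\,f_{\mb{u}\mb{v}}(s) - P_{\mb{u}\mb{v}}(0)$ applies, with $P_{\mb{u}\mb{v}}(0) = 1$ if $\mb{u} = \mb{v}$ and $P_{\mb{u}\mb{v}}(0) = 0$ otherwise.

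For the forward identities I would transform \eqref{eqn:forward_eq}. Its right-hand side is a \emph{finite} sum over $k = 1, \dots, d$, so the Laplace transform passes through it term by term without any interchange-of-limits concern, giving
\[
s\, f_{\mb{0}\mb{x}}(s) - P_{\mb{0}\mb{x}}(0) = \sum_{k=1}^d \lambda^{(k)}_{\mb{x}-\mb{e}_k}\, f_{\mb{0},\,\mb{x}-\mb{e}_k}(s) - \Bigl( \sum_{k=1}^d \lambda^{(k)}_{\mb{x}} \Bigr) f_{\mb{0}\mb{x}}(s).
\]
For $\mb{x} \ne \mb{0}$ the initial-condition term vanishes; moving the $f_{\mb{0}\mb{x}}(s)$ contribution to the left-hand side and dividing by $s + \sum_{j=1}^d \lambda^{(j)}_{\mb{x}}$, which is strictly positive for $s > 0$, yields the third line of \eqref{eqn:recursive}. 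For $\mb{x} = \mb{0}$ the predecessor sum is empty by the convention that $\lambda^{(k)}_{\mb{x}} = 0$ whenever $x_k = -1$, the initial-condition term equals $1$, and the same rearrangement produces $f_{\mb{0}\mb{0}}(s) = \bigl( s + \sum_{j=1}^d \lambda^{(j)}_{\mb{0}} \bigr)^{-1}$.

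The backward identities follow from the mirror-image computation applied to \eqref{eqn:backward_eq}, now using $P_{\mb{x}\mb{B}}(0) = 1$ exactly when $\mb{x} = \mb{B}$. For $\mb{x} \ne \mb{B}$ the same algebra gives the fourth line of \eqref{eqn:recursive}; for $\mb{x} = \mb{B}$ the successor sum $\sum_{k=1}^d \lambda^{(k)}_{\mb{B}} f_{\mb{B}+\mb{e}_k,\,\mb{B}}(s)$ vanishes because $\mb{B}+\mb{e}_k \not\le \mb{B}$ forces $P_{\mb{B}+\mb{e}_k,\,\mb{B}}(t) \equiv 0$, leaving $f_{\mb{B}\mb{B}}(s) = \bigl( s + \sum_{j=1}^d \lambda^{(j)}_{\mb{B}} \bigr)^{-1}$. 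I do not expect any real obstacle here; the only point deserving a line of justification is the validity of the derivative rule for the Laplace transform, and that is covered entirely by the uniform bound $P_{\mb{u}\mb{v}} \in [0,1]$ together with the differentiability already established in Theorem \ref{thm:CKeq}.
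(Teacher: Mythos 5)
Your proposal is correct and follows essentially the same route as the paper's proof: Laplace-transform the forward and backward Chapman--Kolmogorov equations, apply $\mathcal{L}[\tfrac{d}{dt}P](s) = s f(s) - P(0)$, use the initial conditions and the vanishing of the out-of-range terms, and rearrange. The extra care you take in justifying the derivative rule via the bound $P_{\mb{u}\mb{v}} \in [0,1]$ is a welcome but minor addition to what the paper does.
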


\begin{proof}
Applying a Laplace transform to both sides of (\ref{eqn:forward_eq}) and (\ref{eqn:backward_eq}), we arrive at
\begin{equation}
\begin{aligned}
\mathcal{L} \left [ \frac{dP_{\mb{0} \mb{x}}(t)}{dt} \right ](s) &=  \sum_{k=1}^d {\lambda^{(k)}_{\mb{x} - \mb{e}_k} \mathcal{L}[P_{\mb{0}, \mb{x} - \mb{e}_k}(t)](s)}  - \left ( \sum_{k=1}^d {\lambda^{(k)}_{\mb{x}}} \right ) \mathcal{L}[P_{\mb{0} \mb{x}}(t)](s) \text{ and} \\
\mathcal{L} \left [ \frac{dP_{\mb{x} \mb{B}}(t)}{dt} \right ](s) &=  \sum_{k=1}^d {\lambda^{(k)}_{\mb{x}} \mathcal{L}[P_{\mb{x} + \mb{e}_k, \mb{B}}(t)](s)}  - \left ( \sum_{k=1}^d {\lambda^{(k)}_{\mb{x}}} \right ) \mathcal{L}[P_{\mb{x} \mb{B}%, \mb{x}_0%
}(t)](s).
\end{aligned}
\end{equation}
Noting that
\begin{equation}
\begin{aligned}
\mathcal{L} \left [ \frac{dP_{\mb{0} \mb{x}}(t)}{dt} \right ](s) &= s \mathcal{L}[P_{\mb{0} \mb{x}}(t)](s) - P_{\mb{0} \mb{x}}(0) \text{ and} \\
\mathcal{L} \left [ \frac{dP_{\mb{x} \mb{B}}(t)}{dt} \right ](s) &= s \mathcal{L}[P_{\mb{x} \mb{B}}(t)](s) - P_{\mb{x} \mb{B}}(0)
\end{aligned}
\end{equation}
enables us to write
\begin{equation}
\begin{aligned}
s f_{\mb{0} \mb{x}}(s) - P_{\mb{0} \mb{x}}(0) &= \sum_{k=1}^d {\lambda^{(k)}_{\mb{x} - \mb{e}_k} f_{\mb{0}, \mb{x} - \mb{e}_k}(s)}  - \left ( \sum_{k=1}^d {\lambda^{(k)}_{\mb{x}}} \right )f_{\mb{0} \mb{x}}(s) \text{ and} \\
s f_{\mb{x} \mb{B}}(s) - P_{\mb{x} \mb{B}}(0) & = \sum_{k=1}^d {\lambda^{(k)}_{\mb{x}} f_{\mb{x} + \mb{e}_k, \mb{B}}(s)}  - \left ( \sum_{k=1}^d {\lambda^{(k)}_{\mb{x}}} \right )f_{\mb{x} \mb{B}}(s).
\label{eqn:Lapdomain}
\end{aligned}
\end{equation}
From \eqref{eqn:Lapdomain}, we have
\begin{equation}
\begin{aligned}
s f_{\mb{0} \mb{0}}(s) - P_{\mb{0} \mb{0}}(0) &= \sum_{k=1}^d {\lambda^{(k)}_ {- \mb{e}_k} f_{\mb{0}, - \mb{e}_k}(s)}  - \left ( \sum_{k=1}^d {\lambda^{(k)}_{\mb{0}}} \right )f_{\mb{0} \mb{0}}(s) \text{ and} \\
s f_{\mb{B} \mb{B}}(s) - P_{\mb{B} \mb{B}}(0) & = \sum_{k=1}^d {\lambda^{(k)}_{\mb{B}} f_{\mb{B} + \mb{e}_k, \mb{B}}(s)}  - \left ( \sum_{k=1}^d {\lambda^{(k)}_{\mb{B}}} \right )f_{\mb{B} \mb{B}}(s).
\end{aligned}
\end{equation}
Since $P_{\mb{0} \mb{0}}(0)  = P_{\mb{B} \mb{B}}(0) = 1$ and $P_{\mb{0}, - \mb{e}_k}(t)  = P_{\mb{B} + \mb{e}_k, \mb{B}}(t) = 0$, we deduce
\begin{equation}
\begin{aligned}
f_{\mb{0} \mb{0}}(s) &= \frac{1}{s +  \sum_{j=1}^d {\lambda^{(j)}_{\mb{0}}}} \text{ and} \\
f_{\mb{B} \mb{B}}(s) &= \frac{1}{s +  \sum_{j=1}^d {\lambda^{(j)}_{\mb{B}}}}.
\end{aligned}
\end{equation}
Moreover, $P_{\mb{0} \mb{x}}(0) = 0$ for $\mb{x} \ne \mb{0}$ and $P_{\mb{x} \mb{B}}(0) = 0$ for $\mb{x} \ne \mb{B}$.
Hence, from \eqref{eqn:Lapdomain}, we obtain
\begin{equation}
\begin{aligned}
f_{\mb{0} \mb{x}}(s) &= \sum_{k=1}^d{\frac{\lambda^{(k)}_{\mb{x} - \mb{e}_k}}{s + \sum_{j=1}^d {\lambda^{(j)}_{\mb{x}}}}} f_{\mb{0}, \mb{x} - \mb{e}_k}(s) \text{ and} \\
f_{\mb{x} \mb{B}}(s) &= \sum_{k=1}^d{\frac{\lambda^{(k)}_{\mb{x}}}{s + \sum_{j=1}^d {\lambda^{(j)}_{\mb{x}}}}} f_{\mb{x} + \mb{e}_k, \mb{B}}(s).
\end{aligned}
\end{equation}
Thus, the proof is completed.
\end{proof}
From Corollary \ref{cor:recursive}, we can derive analytic formulae for all $\{ f_{\mb{0} \mb{x}}(s) \}_{\mb{x} \leq \mb{B}}$ and $\{ f_{\mb{x} \mb{B}}(s) \}_{\mb{x} \leq \mb{B}}$.
For $\mb{u} \leq \mb{v}$, let a path from $\mb{u}$ to $\mb{v}$ be an increasing sequence $\mb{p} = \{\mb{p}_i\}_{i=1}^{n}$ such that
\[
\mb{p}_1 = \mb{u},~~\mb{p}_n = \mb{v},~~ \mb{p}_i \leq  \mb{p}_{i+1} ,~~\text{and}~~ \mb{p}_{i+1} - \mb{p}_i \in \{ \mb{e}_1, \mb{e}_2, \ldots, \mb{e}_d\}.
\]
Denote $\mc{P}_{\mb{u} \mb{v}}$ and $\mc{I}_i$ to be the set of all paths from $\mb{u}$ to $\mb{v}$ and the index of the only non-zero coordinate of $\mb{p}_{i+1} - \mb{p}_i$ respectively. We have
\begin{equation}
\begin{aligned}
    f_{\mb{0} \mb{x}}(s) &= \frac{1}{s +  \sum_{j=1}^d {\lambda^{(j)}_{\mb{0}}}} \left ( \sum_{\mb{p} \in \mc{P}_{\mb{0} \mb{x}}}{ \prod_{i = 2}^n{\frac{\lambda^{(\mc{I}_{i-1})}_{\mb{p}_{i-1}}}{s + \sum_{j=1}^d {\lambda^{(j)}_{\mb{p}_i}}}}} \right ) \\
f_{\mb{x} \mb{B}}(s) &= \frac{1}{s +  \sum_{j=1}^d {\lambda^{(j)}_{\mb{B}}}} \left ( \sum_{\mb{p} \in \mc{P}_{\mb{x} \mb{B}}}{ \prod_{i = 1}^{n-1}{\frac{\lambda^{(\mc{I}_i)}_{\mb{p}_{i}}}{s + \sum_{j=1}^d {\lambda^{(j)}_{\mb{p}_i}}}}} \right ).
\end{aligned}
\label{eqn:analytic}
\end{equation}
However, evaluating $\{ f_{\mb{0} \mb{x}}(s) \}_{\mb{x} \leq \mb{B}}$ and $\{ f_{\mb{x} \mb{B}}(s) \}_{\mb{x} \leq \mb{B}}$ using \eqref{eqn:analytic} is infeasible because the number of paths from $\mb{0}$ to $\mb{B}$ is extremely large. For example, when all the birth rates are positive, the number of paths is
%\todo{Well, only if lots of jumps are possible.  Maybe not all the lambdas are positive.}
\begin{equation}
\prod_{i = 1}^{d}{\frac{\left (\sum_{j=i}^d{B_j} \right ) !}{B_i ! \left (\sum_{j=i+1}^d{B_j} \right ) !}} .
\label{eqn:numpath}
\end{equation}
%where $m! = 1 \times 2 \times \ldots \times m$.
For example, when $d = 2$ and $B_1 = B_2 = B$, the number of paths \eqref{eqn:numpath} becomes $(B+1) (B+ 2) \cdots (2B) > B^B$.

\par

%Importantly,
The sum-product structure in \eqref{eqn:analytic} suggests that dynamic programming may lead to efficient computation of $\{ f_{\mb{0} \mb{x}}(s) \}_{\mb{x} \leq \mb{B}}$ and $\{ f_{\mb{x} \mb{B}}(s) \}_{\mb{x} \leq \mb{B}}$ that we achieve through the recursive formulae \eqref{eqn:recursive}.
%Therefore, we employ dynamic programming to efficiently compute $\{ f_{\mb{0} \mb{x}}(t) \}_{\mb{x} \leq \mb{B}}$ and $\{ f_{\mb{x} \mb{B}}(t) \}_{\mb{x} \leq \mb{B}}$ from recursive formulae \eqref{eqn:recursive} instead of using \eqref{eqn:analytic}.
The computation cost of the recursion is only $\mathcal{O}(\prod_{k=1}^d{B_k})$ because we need one loop for each coordinate.
Algorithm \ref{alg:dynamic} presents pseudo-code for computing $\{ f_{\mb{0} \mb{x}}(s) \}_{\mb{x} \leq \mb{B}}$ via dynamic programming.
The algorithm for evaluating $\{ f_{\mb{x} \mb{B}}(s) \}_{\mb{x} \leq \mb{B}}$ is similar.

\begin{algorithm}
\begin{algorithmic}[1]
\REQUIRE $s > 0$, $\{\lambda_{\mb{x}}^{(j)}\}_{j=1}^d$
\STATE $f_{\mb{0} \mb{0}} \leftarrow 1$
\FOR{$i_1 = 0$ to $B_1$}
\FOR{$i_2 = 0$ to $B_2$}
\STATE $\vdots$
\FOR{$i_d = 0$ to $B_d$}
\STATE $\mb{x} \leftarrow (i_1, i_2, \ldots, i_d)$
\STATE $m \leftarrow s + \sum_{j=1}^d{\lambda_{\mb{x}}^{(j)}}$
\STATE $f_{\mb{0} \mb{x}} \leftarrow f_{\mb{0} \mb{x}}/m$
\FOR{$k = 1$ to $d$}
\IF{$i_k < B_k$}
\STATE $f_{\mb{0}, \mb{x}+\mb{e}_k} \leftarrow f_{\mb{0}, \mb{x}+\mb{e}_k} + \lambda_{\mb{x}}^{(k)} \times f_{\mb{0} \mb{x}}$
\ENDIF
\ENDFOR
\ENDFOR
\STATE $\vdots$
\ENDFOR
\ENDFOR
\end{algorithmic}
\caption{Dynamic programming algorithm for computing $\{ f_{\mb{0} \mb{x}}(s) \}_{\mb{x} \leq \mb{B}}$.}
 \label{alg:dynamic}
\end{algorithm}

Then, we approximate the inverse Laplace transform of $f_{\mb{u} \mb{v}}(s)$ by the method proposed in \citet[][equation (4.6)]{abate1992}:
 \begin{equation}
 P_{\mb{u} \mb{v}}(t) = \mathcal{L}^{-1}(f_{\mb{u} \mb{v}})(t) \approx \frac{e^{M/2}}{2t} {\cal R} \left [ f_{\mb{u} \mb{v}} \left ( \frac{M}{2t} \right ) \right ] + \frac{e^{M/2}}{t} \sum_{k=1}^\infty{(-1)^k {\cal R} \left [ f_{\mb{u} \mb{v}} \left ( \frac{M + 2k \pi i}{2t} \right ) \right ]} ,
 \label{eqn:invLap}
 \end{equation}
where ${\cal R} [z]$ is the real part of $z$.
Here, the positive number $M$ is used to control the discretization error.
Specifically, the discretization error is
\[
\sum_{k=1}^\infty e^{-k M}  P_{\mb{u} \mb{v}}((2k + 1)t),
\]
which can be bounded by $1/(e^M - 1)$.
However, \citet{abate1992} warn that we should not choose $M$ too large because it makes the infinite sum $(\ref{eqn:invLap})$ harder to evaluate.
They suggest to aim for $10^{-7}$ to $10^{-8}$ accuracy on a machine with $14$-digit precision.
Follow this instruction, we choose $M = 20$ throughout this paper.
We opt to use a Levin acceleration method \citep{levin1973} to improve the convergence rate of $(\ref{eqn:invLap})$.
Let $L$ be the number of iterations required from Levin acceleration to achieve a certain error bound for the approximation $(\ref{eqn:invLap})$, then we have the following corollary:

\begin{cor}
The total complexity of our algorithm to compute $\{ P_{\mb{0} \mb{x}}(t) \}_{\mb{x} \leq \mb{B}}$ and $\{ P_{\mb{x} \mb{B}}(t) \}_{\mb{x} \leq \mb{B}}$ is $\mathcal{O}(L \prod_{k=1}^d{B_k})$.
\label{cor:MBcomplex}
\end{cor}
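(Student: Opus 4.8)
The plan is to account separately for the cost of the two stages of the algorithm: (i) evaluating the Laplace-domain quantities $f_{\mb{u}\mb{v}}(s)$ through the dynamic program of Algorithm \ref{alg:dynamic} (and its backward analogue), and (ii) recovering $P_{\mb{u}\mb{v}}(t)$ from these via the accelerated inverse-Laplace series \eqref{eqn:invLap}. The corollary then follows by adding the two contributions.

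For stage (i), I would first argue that a single call to Algorithm \ref{alg:dynamic} returns the entire family $\{f_{\mb{0}\mb{x}}(s)\}_{\mb{x}\le\mb{B}}$ at the given argument $s$ in $\mathcal{O}(\prod_{k=1}^d B_k)$ arithmetic operations. This is immediate from the nested-loop structure: the $d$ outer loops enumerate exactly the $\prod_{k=1}^d (B_k+1)$ lattice points $\mb{x}$ with $\mb{0}\le\mb{x}\le\mb{B}$, and at each such point the body performs a constant amount of work together with the inner loop over $k\in\{1,\dots,d\}$, i.e.\ $\mathcal{O}(d)$ operations. Since $d$ is fixed by the choice of compartmental model, the per-point cost is $\mathcal{O}(1)$ and the total is $\mathcal{O}(\prod_{k=1}^d B_k)$; correctness of what is computed is exactly the recursion \eqref{eqn:recursive} of Corollary \ref{cor:recursive}. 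The backward recursion for $\{f_{\mb{x}\mb{B}}(s)\}_{\mb{x}\le\mb{B}}$ has the same cost.

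For stage (ii), the inverse transform \eqref{eqn:invLap} is an infinite alternating series whose $k$-th term requires one evaluation of $f_{\mb{u}\mb{v}}$ at the complex node $s_k=(M+2k\pi i)/(2t)$; note that Corollary \ref{cor:recursive}, and hence Algorithm \ref{alg:dynamic}, is valid verbatim for complex $s$. By definition $L$ is the number of terms after which Levin acceleration reaches the target error, so stage (ii) triggers $L$ dynamic-programming sweeps, each costing $\mathcal{O}(\prod_{k=1}^d B_k)$ by stage (i), for a total of $\mathcal{O}(L\prod_{k=1}^d B_k)$ to produce the $L$ partial sums simultaneously for all $\mb{x}\le\mb{B}$. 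It then remains to check that forming the accelerated estimate from the $L$ computed terms — done once per target state — adds only $\mathcal{O}(L)$ work per state, hence $\mathcal{O}(L\prod_{k=1}^d B_k)$ overall, and so does not dominate. Summing the forward and backward versions of both stages gives the claimed $\mathcal{O}(L\prod_{k=1}^d B_k)$.

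The main point requiring care is bookkeeping rather than analysis: one must keep straight that $L$ counts Laplace-domain evaluations, each of which is a full $\mathcal{O}(\prod_k B_k)$ sweep, rather than merely scalar operations in the acceleration step, and that treating $d$ as a model-dependent constant is precisely what removes it from the stated bound. No analytic input beyond Theorem \ref{thm:CKeq}, Corollary \ref{cor:recursive}, and the discretization-error analysis of \citet{abate1992} is needed.
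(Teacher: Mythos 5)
Your proposal is correct and follows exactly the argument the paper gives implicitly in the surrounding text (the paper states the corollary without a formal proof): one dynamic-programming sweep of Algorithm \ref{alg:dynamic} costs $\mathcal{O}(\prod_{k=1}^d B_k)$ with $d$ treated as a model-dependent constant, and the inverse transform \eqref{eqn:invLap} with Levin acceleration requires $L$ such sweeps at distinct (complex) arguments $s$. Your additional bookkeeping — that the recursion applies verbatim for complex $s$ and that the $\mathcal{O}(L)$ acceleration overhead per state does not dominate — is a faithful elaboration of the same reasoning, not a different route.
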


Note that when we aim for $10^{-8}$ accuracy, $L$ usually ranges from $100$ to $1000$.

%%%%%%%%%%%%%%%

\subsection{Re-parameterization}

Given an $m$-compartmental process $\mb{Y}(t)$ with $d$ possible types of transition between compartments,
computing the transition probability $\Pr \{\mb{Y}(t)= \mb{v}~ |~\mb{Y}(0)=\mb{u}\}$ by solving the compartmental Chapman-Kolmogorov equations is generally intractable because, unlike multivariate birth processes, individual compartment population $Y_i(t)$ may increase or decrease over time.
Here, we recast $\mb{Y}(t)$ into a $d$-dimensional birth process $\mb{X}(t)$ and aim to compute the transition probabilities of $\mb{Y}(t)$ from the transition probabilities of $\mb{X}(t)$.

We denote $i \to j$ be a transition from compartment $\comp_i$ to compartment $\comp_j$.
For $k = 1, 2 , \ldots, d$, let $i_k \to j_k$ be the $k$-th type of transition.
We construct $\mb{X}(t)$ by letting $X_k(t)$ be the number of $k$-type transition events happening from time $0$ to $t$.
Define an $m \times d$ matrix $\map = [a_{lk}]$ as follows:
\begin{equation}
a_{lk} =
\left\{
\begin{array}{rl}
-1, & \mbox{if } l = i_k \\
1, & \mbox{if } l = j_k \\
0, & \mbox{otherwise} ,
\end{array}
\right.
\end{equation}
then we have the following lemma:

\begin{lemma}
$\mb{Y}(t) =\mb{Y}(0) + [\map \mb{X}(t)]^T$ where $T$ denotes the matrix transpose.
Moreover, the birth rates for $\mb{X}(t)$ are $\lambda^{(k)}_\mb{x} = \mu_{i_k j_k}(\theta, \mb{Y}(0) + [\map \mb{x}]^T)$.
\label{lem:identity}
\end{lemma}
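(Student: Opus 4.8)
The plan is to establish the two claims of Lemma~\ref{lem:identity} separately, starting from the construction of $\mb{X}(t)$ as the vector of transition-type counts. First I would record the bookkeeping identity: whenever a $k$-type transition fires, compartment $\comp_{i_k}$ loses one individual and compartment $\comp_{j_k}$ gains one, which is exactly the effect of adding the $k$-th column of $\map$ to the compartment population vector. Since $X_k(t)$ counts how many $k$-type events have occurred by time $t$, summing these increments over all event types gives
\begin{equation}
\mb{Y}(t) = \mb{Y}(0) + \sum_{k=1}^d X_k(t)\, (\text{$k$-th column of }\map) = \mb{Y}(0) + [\map \mb{X}(t)]^T,
\end{equation}
where the transpose reconciles the column-vector convention for $\map\mb{x}$ with the row-vector convention used for $\mb{Y}$. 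I would make this rigorous by induction on the jump times of $\mb{Y}(t)$: the identity holds at $t=0$ since $\mb{X}(0)=\mb{0}$, and it is preserved across each jump because a $k$-type jump changes the left side by $-\mb{e}_{i_k}+\mb{e}_{j_k}$ (per~\eqref{eqn:bb}) and the right side by $\map \mb{e}_k = (\text{$k$-th column of }\map)$, which has $-1$ in row $i_k$ and $+1$ in row $j_k$ by the definition of $a_{lk}$. Between jumps both sides are constant.

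Next, for the birth-rate claim, I would compare the infinitesimal descriptions~\eqref{eqn:bb} and~\eqref{eqn:mbp}. Conditioned on $\mb{X}(t)=\mb{x}$, the first part of the lemma forces $\mb{Y}(t) = \mb{Y}(0) + [\map\mb{x}]^T$ deterministically, so a $k$-type event—i.e.\ the event $\mb{X}(t+dt) = \mb{x}+\mb{e}_k$—occurs with probability $\mu_{i_k j_k}(\theta, \mb{Y}(t))\,dt + o(dt) = \mu_{i_k j_k}(\theta, \mb{Y}(0)+[\map\mb{x}]^T)\,dt + o(dt)$. Matching this against the defining relation $\Pr\{\mb{X}(t+dt)=\mb{x}+\mb{e}_k \mid \mb{X}(t)=\mb{x}\} = \lambda^{(k)}_{\mb{x}}\,dt + o(dt)$ identifies $\lambda^{(k)}_{\mb{x}} = \mu_{i_k j_k}(\theta, \mb{Y}(0)+[\map\mb{x}]^T)$. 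One also checks the ``no-transition'' line of~\eqref{eqn:mbp} is consistent, since $\sum_k \lambda^{(k)}_{\mb{x}} = \sum_k \mu_{i_k j_k}(\theta,\mb{Y})$ is exactly the total exit rate in~\eqref{eqn:bb} (recall the off-diagonal entries of $\ratemat$ not among the $d$ selected transitions are zero, and $\mu_{ii}=0$).

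I would close by noting that $\mb{X}(t)$ so defined is genuinely a continuous-time Markov chain of the form in the definition: it is nondecreasing coordinate-wise since counts only increase, its state space is $\mathbb{N}^d$, and its jump rates depend only on the current state $\mb{x}$ through the formula above (because $\mb{Y}(0)$ and $\map$ are fixed and the map $\mb{x}\mapsto \mb{Y}(0)+[\map\mb{x}]^T$ is deterministic). The main obstacle, such as it is, is not analytic difficulty but care with conventions: keeping the $m\times d$ shape of $\map$, the transpose, and the row/column vector distinctions consistent throughout, and being explicit that conditioning on the birth-process state pins down the compartmental state exactly, so that the state-dependent compartmental rates translate verbatim into state-dependent birth rates. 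No appeal to Assumption~\ref{sump:regcond} is needed here; that assumption only enters when one wants uniqueness of the process from its rates (Theorem~\ref{thm:CKeq}), whereas this lemma is a pathwise/infinitesimal correspondence.
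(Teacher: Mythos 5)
Your proof is correct. The paper states Lemma~\ref{lem:identity} without proof, treating it as immediate from the construction of $\mb{X}(t)$ as the vector of event counts; your argument---the column-of-$\map$ bookkeeping identity propagated across jump times, plus matching the infinitesimal descriptions~\eqref{eqn:bb} and~\eqref{eqn:mbp} after observing that $\mb{X}(t)=\mb{x}$ pins down $\mb{Y}(t)$---is exactly the reasoning the authors leave implicit, and your remarks on the Markov property of the induced rates and on Assumption~\ref{sump:regcond} being unnecessary here are both accurate.
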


Define $W = \{ \mb{w} \in \mbb{N}^d:   \map \mb{w} = (\mb{u} - \mb{v})^T \}$. By Lemma \ref{lem:identity}, we deduce that
\begin{equation}
Pr \{\mb{Y}(t)= \mb{v}~ |~\mb{Y}(0)=\mb{u}\} = \sum_{\mb{w} \in W}{Pr \{\mb{X}(t)= \mb{w}~ |~\mb{X}(0)=\mb{0}\}}.
\label{eqn:YtoX}
\end{equation}
We want to employ Equation \eqref{eqn:YtoX} for computing the transition probabilities of $\mb{Y}(t)$.
However, evaluating the summation in \eqref{eqn:YtoX} is infeasible when the set $W$ has infinitely many elements.
To limit the cardinality of $W$, we proffer a small restriction on the class of compartmental models for which we can compute their transition probabilities in polynomial complexity.

\begin{assumption}[Finite loops]
Each individual visits each compartment at most $\upper$ times between two consecutive observations.
\label{sump:visit}
\end{assumption}

Assumption \ref{sump:visit} is rarely restrictive for many compartmental models for infectious diseases.
Infected individuals usually develop at least partial immunity to re-inflection that wanes at a rate commensurate with or slower than the observation process.
Further, it is notable that if a compartmental model can be represented by a directed acyclic graph, then an individual never returns to a compartment after leaving.
In this case, this assumption is satisfied with $\upper = 1$.

\begin{thm}
For a compartmental model satisfying Assumption \ref{sump:visit} (Finite loops), the complexity for computing its transition probabilities via Equation \eqref{eqn:YtoX} is $\mathcal{O}(L \upper^d N^d)$, where $N$ is the total population of all compartments.
\label{thm:complexity}
\end{thm}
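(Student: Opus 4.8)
The plan is to reduce the cost of evaluating \eqref{eqn:YtoX} to an application of Corollary \ref{cor:MBcomplex} together with a combinatorial bound on the effective size of the index set $W$. Concretely, I would fix the truncation level $\mb{B} = (\upper N, \upper N, \ldots, \upper N)$ and argue two things: first, that under Assumption \ref{sump:visit} every term in the sum \eqref{eqn:YtoX} that contributes is indexed by some $\mb{w}$ with $\mb{0} \le \mb{w} \le \mb{B}$, so that $W$ may be replaced by $W \cap \{\mb{w} : \mb{w} \le \mb{B}\}$; and second, that this truncated set has at most $(\upper N + 1)^d$ elements, hence cardinality $\mathcal{O}(\upper^d N^d)$.

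The key step, and the one requiring the Finite loops assumption, is the coordinatewise bound $X_k(t) \le \upper N$ for each $k = 1, \ldots, d$. Here I would use the interpretation of $X_k(t)$ from Lemma \ref{lem:identity}: $X_k(t)$ counts the number of transitions of the $k$-th type, $\comp_{i_k} \to \comp_{j_k}$, occurring in the interval. Each such event is in particular an entry of one individual into compartment $\comp_{j_k}$; since the total population $N$ is conserved by the dynamics in \eqref{eqn:bb}, and since by Assumption \ref{sump:visit} each of those $N$ individuals visits (hence enters) $\comp_{j_k}$ at most $\upper$ times between the two observation times, the total number of entries into $\comp_{j_k}$, and a fortiori $X_k(t)$, is at most $\upper N$. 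Consequently $\Pr\{\mb{X}(t) = \mb{w} \mid \mb{X}(0) = \mb{0}\} = 0$ whenever $w_k > \upper N$ for some $k$, so only the $\mb{w} \in W$ with $\mb{0} \le \mb{w} \le \mb{B}$ contribute to \eqref{eqn:YtoX}, and there are at most $\prod_{k=1}^d (B_k + 1) = (\upper N + 1)^d$ such $\mb{w}$.

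To finish, I would invoke Corollary \ref{cor:MBcomplex} with this $\mb{B}$: computing $\{P_{\mb{0}\mb{x}}(t)\}_{\mb{x} \le \mb{B}}$ costs $\mathcal{O}\!\left(L \prod_{k=1}^d B_k\right) = \mathcal{O}(L\upper^d N^d)$. Since the compartmental process on a conserved population of size $N$ has a finite state space, the associated birth process $\mb{X}(t)$ has uniformly bounded birth rates, so Assumption \ref{sump:regcond} holds automatically and Corollary \ref{cor:MBcomplex} applies. Given these transition probabilities, evaluating the sum in \eqref{eqn:YtoX} requires at most $\lvert W \cap \{\mb{w} \le \mb{B}\}\rvert = \mathcal{O}(\upper^d N^d)$ additions, which is dominated by the previous term; summing the two contributions gives the claimed $\mathcal{O}(L\upper^d N^d)$ complexity.

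I expect the only real obstacle to be the second step above, namely turning the per-individual ``visits each compartment at most $\upper$ times'' statement into the clean per-coordinate bound $X_k(t) \le \upper N$ --- in particular, being careful that an individual already residing in $\comp_{j_k}$ at the first observation does not generate an extra counted entry, and that several transition types terminating at the same compartment are accounted for without double counting. Everything after that is bookkeeping and a direct appeal to Corollary \ref{cor:MBcomplex}.
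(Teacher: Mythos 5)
Your proposal is correct and follows essentially the same route as the paper's proof: bound $\mb{X}(t)$ coordinatewise by $\upper N$ using the finite-loops assumption, invoke Corollary \ref{cor:MBcomplex} with that truncation to get the $\mathcal{O}(L\upper^d N^d)$ cost for the birth-process probabilities, and note that summing over $W$ in \eqref{eqn:YtoX} adds nothing to the order. The only difference is that you spell out why each transition type can occur at most $\upper N$ times (entries into the target compartment) and check the regularity condition, steps the paper simply asserts.
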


\begin{proof}
By Assumption \ref{sump:visit}, $\mb{X} \leq \upper \mb{N}$ where $\mb{N}$ is a $d$-dimensional vector $(N_1, N_2, \ldots, N_d)$.
Hence $\lambda^{(k)}_\mb{x} = 0$ when $\mb{x} \geq \upper \mb{N}$.
By Theorem \ref{thm:CKeq} and Corollary \ref{cor:MBcomplex}, we can compute the transition probabilities $(Pr \{\mb{X}(t)= \mb{x}~ |~\mb{X}(0)=\mb{0}\})_{\mb{x} \leq \upper \mb{N}}$ at a cost of $\mathcal{O}(L \upper^d N^d)$.
Then, we can compute the transition probabilities of $\mb{Y}$ through Equation \eqref{eqn:YtoX} at the same cost.
Therefore, the total complexity is $\mathcal{O}(L \upper^d N^d)$.
\end{proof}

%%%%%%%%%%%%%%%
%%%%%%%%%%%%%%%

\section{Compartmental models of infectious diseases}
\label{sec:epimodels}

We apply our recursion method to three prevailing compartmental models of infectious diseases including the SIR, SEIR and SIRS models.

%%%%%%%%%%%%%%%

\subsection{Susceptible-infectious-removed model}
Proposed by \citet{mckendrick1926applications}, the stochastic SIR model is probably the most famous compartmental model in epidemiology.
This model divides the population into three different compartments: susceptible ($S$), infectious ($I$), and removed ($R$), and allows two possible transitions: infection ($S \to I$) with rate $\beta S I$ and removal ($I \to R$) with rate $\gamma I$.
%Because the total population $S(t) + I(t) + R(t)$ is constant, we only need to keep track of $S(t)$ and $I(t)$. Under this model, all possible transitions of $\{S(t), I(t)\}$ during a small time interval $(t,t + dt)$ occur with probabilities
%\begin{equation} \label{eqn:sir}
%\begin{aligned}
% \Pr \left \{ \begin{array}{c | c} S(t+dt) = s - 1 & S(t) = s \\ I(t+dt) = i+1 & I(t) = i  \end{array} \right \} & = (\beta s i) dt + o(dt),  \\
% \Pr \left \{ \begin{array}{c | c} S(t+dt) = s~~~~~  & S(t) = s \\ I(t+dt) = i-1 & I(t) = i \end{array} \right \} & = (\gamma i) dt + o(dt),~\mbox{and}  \\
% \Pr \left \{ \begin{array}{c | c} S(t+dt) = s~~~~~ & S(t) = s \\ I(t+dt) = i~~~~~ & I(t) = i \end{array} \right \} & = 1 - (\beta s i + \gamma i)dt + o(dt).
%\end{aligned}
%\end{equation}
Here, $\beta > 0$ is the infection rate and $\gamma > 0$ is the removal rate of the disease.
Figure \ref{fig:SIR} visualizes the directed graph representing this model.

\begin{figure}[h]
\centering
\begin{tikzpicture}
\shadedraw[ball color=black!50, opacity=0.4] (0, 0) circle (0.8) node[opacity=1.0] {\Huge S};
\shadedraw[ball color=black!50, opacity=0.4] (3, 0) circle (0.8) node[opacity=1.0] {\Huge I};
\shadedraw[ball color=black!50, opacity=0.4] (6, 0) circle (0.8) node[opacity=1.0] {\Huge R};
%
% Links
%
\draw[->, line width=2pt] (1, 0) -- (1.5, 0) node[above] {{\Large $\beta S I $}} -- (2, 0);
\draw[->, line width=2pt] (4, 0) -- (4.5, 0) node[above] {{\Large $\gamma I$}} -- (5, 0);
\end{tikzpicture}
\caption{A directed graph representation of the SIR model.}
\label{fig:SIR}
\end{figure}
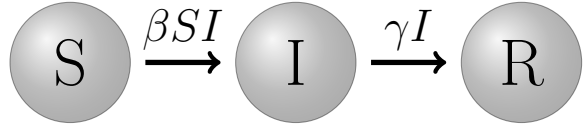

Because the total population $S(t) + I(t) + R(t)$ is constant, \citet{ho2016} consider $\{S(t), I(t)\}$ as a death/birth-death process and propose an algorithm to compute its transition probabilities using a continued fraction representation.
The computational cost of this algorithm for evaluating the full transition probability matrix is $\mc{O}(LN^3)$.
% MAS: Previously defined: where $N$ is the total population and $L$ is the number of iterations of Levin method.
Our present method re-parameterizes the SIR model using number of infection events $N_{SI}(t)$ and removal events $N_{IR}(t)$.
%In this paper, we take a different approach which keeps track of the number of infection events $N_{SI}(t)$ (transition events $S \to I$) and removal events $N_{IR}(t)$ (transition events $I \to R$) instead of $\{S(t), I(t)\}$. This idea has been used in Chemistry for a long time where the variable measuring the progress of all substances in a chemical reaction is called the \emph{degree of advancement} variable. \citet{jenkinson2012numerical} show that this idea can improve the performance of the implicit Euler method for solving the Chapman-Kolmogorov equations.
Note that there is a one-to-one correspondence between $\{S(t), I(t)\}$ and $\{ N_{SI}(t), N_{IR}(t) \}$:
\begin{equation}
%S(t) = s_0 - N_{SI}(t),~~\mbox{and}~~I(t) = i_0 + N_{SI}(t) - N_{IR}(t)
 \begin{pmatrix} S(t) \\  I(t) \end{pmatrix} =
 \begin{pmatrix} s_0 \\  i_0 \end{pmatrix} +
 \begin{pmatrix} -1 & 0 \\ 1 & -1 \end{pmatrix}
  \begin{pmatrix} N_{SI}(t) \\ N_{IR}(t) \end{pmatrix} ,
\end{equation}
where $(s_0, i_0)$ is the realized value of $\{S(0), I(0)\}$.
It follows that $\{ N_{SI}(t), N_{IR}(t) \}$ is a bivariate birth process with birth rates $\beta (s_0 - N_{SI})^+ (i_0 + N_{SI} - N_{IR})^+$ and $\gamma (i_0 + N_{SI} - N_{IR})^+$ where $a^+ = \max \{ a, 0 \}$.
Since the directed graph representing the SIR model is acyclic, Assumption \ref{sump:visit} is satisfied with $\upper = 1$.
By Theorem \ref{thm:complexity}, we have the following Corollary:

\begin{cor}
The complexity for evaluating the full transition probability matrix of the SIR model using our method is $\mc{O}(LN^2)$.
\end{cor}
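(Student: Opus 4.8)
The plan is to instantiate Theorem~\ref{thm:complexity} for the SIR model. First I would record the relevant parameters: the SIR model has $m = 3$ compartments but only $d = 2$ admissible transition types, namely infection $S \to I$ and removal $I \to R$, so the re-parameterization of the previous subsection produces the bivariate birth process $\mb{X}(t) = \{N_{SI}(t), N_{IR}(t)\}$ with the birth rates displayed above. Second, I would observe that the directed graph of the SIR model (Figure~\ref{fig:SIR}) is acyclic, so no individual ever revisits a compartment; hence Assumption~\ref{sump:visit} holds with $\upper = 1$.

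Next I would bound the state space of $\mb{X}$. Since the total population $N = s_0 + i_0 + r_0$ is conserved, the number of infection events satisfies $N_{SI}(t) \le s_0 \le N$ and the number of removal events satisfies $N_{IR}(t) \le s_0 + i_0 \le N$; thus $\mb{X}(t) \le \mb{B}$ with both coordinate bounds $B_1, B_2 = \mathcal{O}(N)$. Invoking Theorem~\ref{thm:CKeq} and Corollary~\ref{cor:MBcomplex}, computing all $\{ P_{\mb{0} \mb{x}}(t) \}_{\mb{x} \le \mb{B}}$ costs $\mathcal{O}(L B_1 B_2) = \mathcal{O}(L N^2)$; this is precisely the bound of Theorem~\ref{thm:complexity} specialized to $d = 2$ and $\upper = 1$, namely $\mathcal{O}(L \upper^d N^d) = \mathcal{O}(L N^2)$.

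Finally I would transfer this bound to the compartmental process via Equation~\eqref{eqn:YtoX}. The one point to check is that recovering a transition probability of $\{S(t), I(t)\}$ from those of $\mb{X}(t)$ costs nothing extra: because the linear map $\map$ relating $\{N_{SI}, N_{IR}\}$ to $\{S, I\}$ is injective --- given $(s_0, i_0)$, the identities $N_{SI}(t) = s_0 - S(t)$ and $N_{IR}(t) = s_0 + i_0 - S(t) - I(t)$ pin down $\mb{X}(t)$ uniquely --- the set $W$ in \eqref{eqn:YtoX} is a singleton, so each entry of the SIR transition matrix equals a single, already-computed entry of the birth-process matrix. Reading off the $\mathcal{O}(N^2)$ entries thus costs $\mathcal{O}(N^2)$, and the total stays $\mathcal{O}(L N^2)$. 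There is no substantive obstacle here; the only items needing care are confirming that \emph{both} coordinate bounds are linear in $N$ (so that their product is $N^2$, not larger) and that $W$ is trivial, and both follow at once from conservation of the total population and the injectivity of $\map$.
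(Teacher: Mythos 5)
Your proposal is correct and follows essentially the same route as the paper: the paper likewise notes the one-to-one correspondence between $\{S(t),I(t)\}$ and $\{N_{SI}(t),N_{IR}(t)\}$, observes that acyclicity gives $\upper=1$, and then applies Theorem~\ref{thm:complexity} with $d=2$ to obtain $\mc{O}(L\upper^dN^d)=\mc{O}(LN^2)$. Your extra checks --- that both coordinate bounds are $\mathcal{O}(N)$ and that the set $W$ is a singleton by injectivity of $\map$ --- are sound and merely make explicit what the paper leaves implicit.
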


We remark that our present method is an order of magnitude in $N$ faster than that of \citet{ho2016} for computing the entire transition probability matrix.
In practice, however, we often only need to compute the transition probabilities between observations.
In this case, the computational cost of our present method decreases further to $\mc{O}(L \Delta_S \Delta_R)$ where $\Delta_S$ and $\Delta_R$ are the changes in susceptible and removed populations between observations.
In many situations, $\Delta_S$ and $\Delta_R$ are significantly smaller than the total population $N$, for example, when tracing the dynamics of a rare disease across an entire nation.
We implement our method in the \texttt{R}  function \texttt{SIR\_prob} (\texttt{MultiBD} package) \url{https://github.com/msuchard/MultiBD}.
%\todo{This is not self-evident, and you are throwing out the WHOLE meat of the paper.  There's the Laplace transform, and then the IDENTIFICATION of a recursive relationship.  This ALL needs to be here!}

\begin{figure}[h]
\begin{center}
\includegraphics[width=0.5\textwidth]{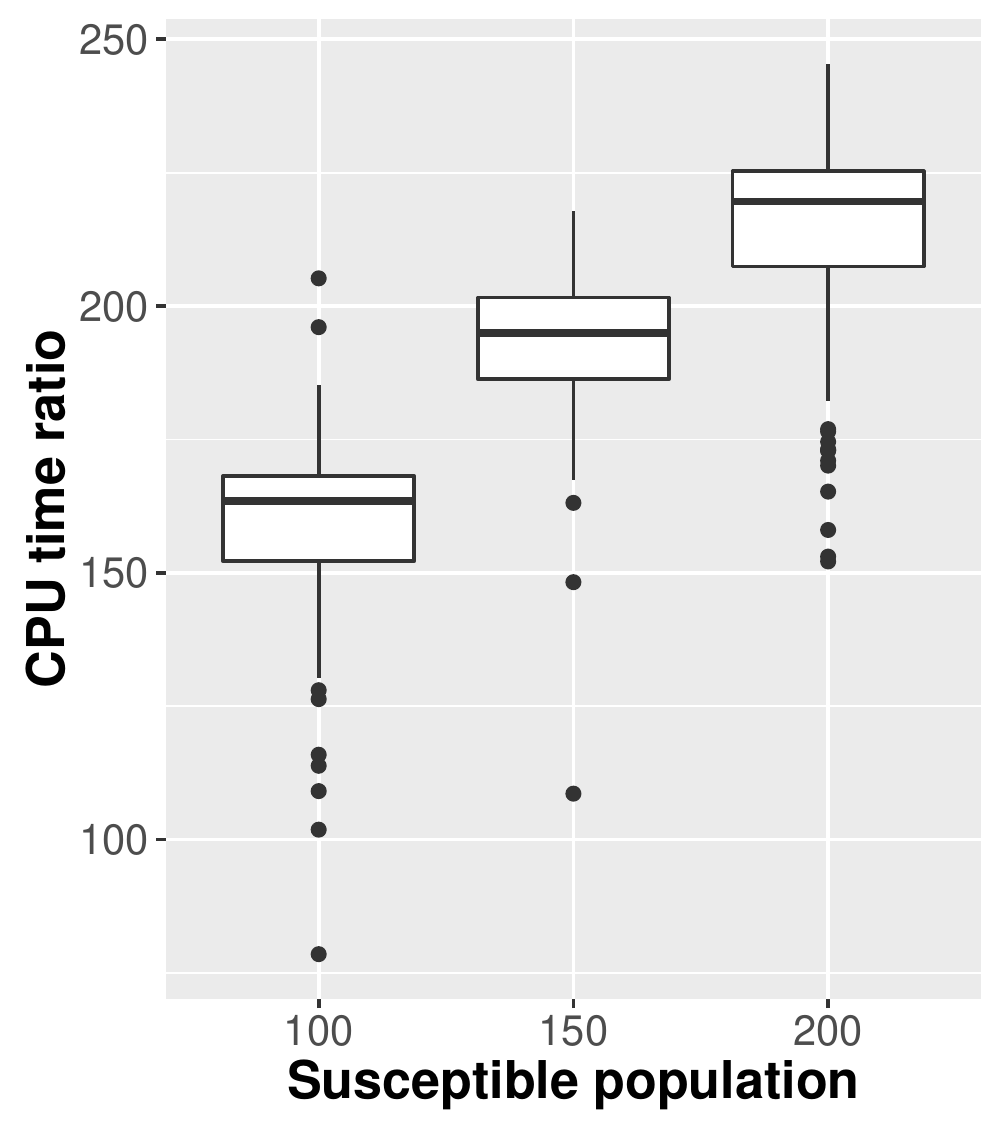}
\end{center}
\caption{CPU time ratios of the continued fraction method (\texttt{dbd\_prob}) to the proposed recursion method (\texttt{SIR\_prob}) for computing the full transition probabilities matrix of the SIR model with $\gamma = 2.73$ and $\beta = 0.0178$. We set $I(0) = 1$ and $S(0) = 100, 150, 200$.}
\label{fig:simulation_SIR}
\end{figure}

To illustrate the computation gain of our recursion method compared to the continued fraction representation of \citet{ho2016}, we evaluate the full forward transition probability matrix of the SIR model with $\gamma = 2.73$ and $\beta = 0.0178$ \citep[estimated values from the Eyam plague data by][]{ho2016} using both methods.
The death/birth-death method in \citet{ho2016} is implemented in the \texttt{R} function \texttt{dbd\_prob} (\texttt{MultiBD} package).
We set the starting infectious population $i_0$ to be $1$ and consider 3 different starting susceptible populations $s_0 = 100, 150, 200$.
For each scenario, we repeat the evaluation a hundred times and compare the computing times and the results from both methods.
Figure \ref{fig:simulation_SIR} summarizes this comparison, and we see that \texttt{SIR\_prob} is more than $150$ times faster than \texttt{dbd\_prob}.
% in this example sufficiently small so that we perform the comparison.
On the other hand, the two methods return similar transition probability matrices whose $L_1$ distance is less than $10^{-12}$. Here, the $L_1$ distance between two matrices $\mb{A} = (a_{ij})$ and $\mb{B}=(b_{ij})$ is $\sum_{ij}{|a_{ij} - b_{ij}|}$.

%%%%%%%%%%%%%%%

\subsection{Susceptible-exposed-infectious-removed model}
The SEIR model extends the SIR model by adding an exposed (E) compartment.
We visualize the SEIR model by the directed acyclic graph in Figure \ref{fig:SEIR}.

\begin{figure}[h]
\centering
\begin{tikzpicture}
%
% Compartments
%
\shadedraw[ball color=black!50, opacity=0.4] (0, 0) circle (0.8) node[opacity=1.0] {\Huge S};
\shadedraw[ball color=black!50, opacity=0.4] (3, 0) circle (0.8) node[opacity=1.0] {\Huge E};
\shadedraw[ball color=black!50, opacity=0.4] (6, 0) circle (0.8) node[opacity=1.0] {\Huge I};
\shadedraw[ball color=black!50, opacity=0.4] (9, 0) circle (0.8) node[opacity=1.0] {\Huge R};
%
% Links
%
\draw[->, line width=2pt] (1, 0) -- (1.5, 0) node[above] {{\Large $\beta S I $}} -- (2, 0);
\draw[->, line width=2pt] (4, 0) -- (4.5, 0) node[above=3pt] {{\Large $\kappa E$}} -- (5, 0);
\draw[->, line width=2pt] (7, 0) -- (7.5, 0) node[above] {{\Large $\gamma I$}} -- (8, 0);
\end{tikzpicture}
\caption{A directed graph representation of the SEIR model.}
\label{fig:SEIR}
\end{figure}
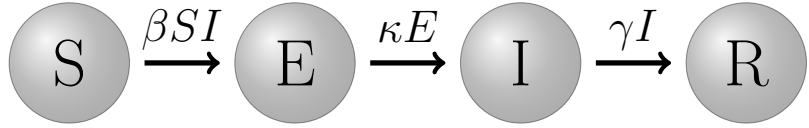

%Similar to the SIR model, the total population $N$ is also assumed to be constant. Therefore, we only need to keep track of $S(t)$, $E(t)$, and $I(t)$. Under the stochastic SEIR model, $\{S(t), E(t), I(t)\}$ has $4$ possible transitions during a small time interval $(t,t + dt)$:
%\begin{equation}
%\begin{aligned}
%& \Pr \left \{ \begin{array}{c | c} S(t+dt) = s - 1 & S(t) = s \\ E(t+dt) = e+1 & E(t) = e \\ I(t+dt) = i~~~~~ & I(t) = i \end{array} \right \} = (\beta s i) dt + o(dt), \\
%& \Pr \left \{ \begin{array}{c | c} S(t+dt) = s~~~~~  & S(t) = s \\ E(t+dt) = e-1 & E(t) = e \\ I(t+dt) = i+1 & I(t) = i \end{array} \right \} = (\kappa e) dt + o(dt), \\
%& \Pr \left \{ \begin{array}{c | c} S(t+dt) = s~~~~~  & S(t) = s \\ E(t+dt) = e~~~~~ & E(t) = e \\ I(t+dt) = i-1 & I(t) = i \end{array} \right \} = (\gamma i) dt + o(dt),~\mbox{and}  \\
%& \Pr \left \{ \begin{array}{c | c} S(t+dt) = s~~~~~ & S(t) = s \\ E(t+dt) = e~~~~~ & E(t) = e \\ I(t+dt) = i~~~~~ & I(t) = i  \end{array} \right \} = 1 - ( \beta s i + \kappa e + \gamma i)dt + o(dt).
%\end{aligned}
%\label{eqn:seir}
%\end{equation}
Let $\{ N_{SE}(t), N_{EI}(t), N_{IR}(t) \}$ be the number of transition events $S \to E$, $E \to I$, and $I \to R$ respectively.
Then, we have an one-to-one correspondence with $\{S(t), E(t), I(t)\}$ as follows:
\begin{equation}
%S(t) = s_0 - N_{SE}(t),~~E(t) = e_0 + N_{SE}(t) - N_{EI}(t),~~\mbox{and}~~ I(t) = i_0 + N_{EI}(t) - N_{IR}(t).
 \begin{pmatrix} S(t) \\  E(t) \\ I(t) \end{pmatrix} =
 \begin{pmatrix} s_0 \\  e_0 \\ i_0 \end{pmatrix} +
 \begin{pmatrix} -1 & 0 & 0 \\ 1 & -1 & 0 \\ 0 & 1 & -1 \end{pmatrix}
  \begin{pmatrix} N_{SE}(t) \\ N_{EI}(t) \\  N_{IR}(t) \end{pmatrix} ,
\end{equation}
where $(s_0, e_0, i_0)$ is the realized value of $\{S(0), E(0), I(0)\}$.
Again, $\{ N_{SE}, N_{EI}, N_{IR}\}$ is a trivariate birth process with birth rates $ \beta (s_0 - N_{SE})^+ (i_0 + N_{EI} - N_{IR})^+$, $\kappa (e_0 + N_{SE} - N_{EI})^+$, and $\gamma (i_0 + N_{EI} - N_{IR})^+$.
By Theorem \ref{thm:complexity}, we have:

\begin{cor}
The complexity for evaluating the full transition probability matrix of the SEIR model using our method is $\mc{O}(LN^3)$.
\end{cor}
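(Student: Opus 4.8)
The plan is to instantiate Theorem~\ref{thm:complexity} for the SEIR model using the specific re-parameterization displayed just above. First I would record that the SEIR model admits exactly $d = 3$ transition types — $S \to E$, $E \to I$, and $I \to R$ — so that the associated multivariate birth process $\mb{X}(t) = \{ N_{SE}(t), N_{EI}(t), N_{IR}(t) \}$ is trivariate. The one-to-one correspondence between $\{S(t), E(t), I(t)\}$ and $\mb{X}(t)$ is precisely the content of Lemma~\ref{lem:identity} with the $3 \times 3$ matrix $\map$ read off from the three transitions, and it also supplies the birth rates $\lambda^{(1)}_{\mb{x}} = \beta (s_0 - N_{SE})^+ (i_0 + N_{EI} - N_{IR})^+$, $\lambda^{(2)}_{\mb{x}} = \kappa (e_0 + N_{SE} - N_{EI})^+$, and $\lambda^{(3)}_{\mb{x}} = \gamma (i_0 + N_{EI} - N_{IR})^+$.

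Next I would verify the two assumptions needed to apply Theorem~\ref{thm:complexity}. For Assumption~\ref{sump:visit} (finite loops), I would observe that the directed graph of Figure~\ref{fig:SEIR} is acyclic, so no individual ever returns to a compartment it has left; hence the assumption holds with $\upper = 1$, and in particular each of $N_{SE}, N_{EI}, N_{IR}$ is bounded by the corresponding compartment capacity, with the relevant birth rates vanishing once $\mb{X}$ reaches $\mb{N}$. For Assumption~\ref{sump:regcond} (the regularity condition underpinning Theorem~\ref{thm:CKeq} and Corollary~\ref{cor:MBcomplex}), I would note that the state space is finite by the boundedness just established, so the quantities $\lambda_i$ are uniformly bounded and $\sum_i 1/\lambda_i = \infty$ holds trivially; thus the recursion of Corollary~\ref{cor:recursive} and the complexity bound of Corollary~\ref{cor:MBcomplex} both apply to $\mb{X}(t)$.

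With both assumptions in hand the conclusion is immediate: Theorem~\ref{thm:complexity} gives a cost of $\mathcal{O}(L \upper^d N^d)$ for computing the transition probabilities via Equation~\eqref{eqn:YtoX}, and substituting $d = 3$ and $\upper = 1$ yields $\mathcal{O}(L N^3)$. I do not anticipate a genuine obstacle; the only point that merits a sentence of care is stating the acyclicity argument cleanly enough to justify $\upper = 1$, since everything else is a direct substitution into the already-established Theorem~\ref{thm:complexity}.
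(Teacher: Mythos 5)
Your proposal is correct and follows essentially the same route as the paper: re-parameterize SEIR as the trivariate birth process $\{N_{SE}, N_{EI}, N_{IR}\}$ with the stated birth rates, note that the acyclic transition graph gives $\upper = 1$ in Assumption~\ref{sump:visit}, and apply Theorem~\ref{thm:complexity} with $d = 3$ to obtain $\mathcal{O}(LN^3)$. The only difference is that you spell out the verification of the regularity condition and the $\upper = 1$ step, which the paper leaves implicit.
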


%%%%%%%%%%%%%%%

\subsection{Susceptible-infectious-removed-susceptible model}

For some diseases, removed persons can lose immunity, making possible transition from the ``recovered'' (R) to ``susceptible'' (S) compartments.
The SIRS model takes into account these scenarios by allowing the transition $R \to S$.
Figure \ref{fig:SIRS} visualizes the directed graph representing the SIRS model.

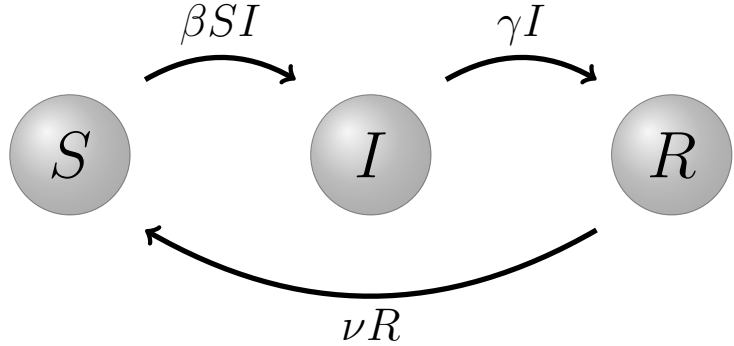
\begin{figure}[h]
\centering
\begin{tikzpicture}
%
% Compartments
%
\shadedraw[ball color=black!50, opacity=0.4] (0, 0) circle (0.8) node[opacity=1.0] {\Huge $S$};
\shadedraw[ball color=black!50, opacity=0.4] (4, 0) circle (0.8) node[opacity=1.0] {\Huge $I$};
\shadedraw[ball color=black!50, opacity=0.4] (8, 0) circle (0.8) node[opacity=1.0] {\Huge $R$};
%
% Links
%
\draw[bend left,->, line width=2pt] (1, 1) to node[above] {{\Large $\beta S I $}} (3, 1);
\draw[bend left, ->, line width=2pt] (5, 1) to node[above] {{\Large $\gamma I$}} (7, 1);
\draw[bend left, ->, line width=2pt] (7, -1) to node[below] {{\Large $\nu R$}} (1, -1);
\end{tikzpicture}
\caption{A directed graph representation of the SIRS model.}
\label{fig:SIRS}
\end{figure}

%Again, the population is closed, so we only need to keep track of $S(t)$ and $I(t)$.
%Because removal persons can loose immunity, $\{S(t), I(t)\}$ has $4$ possible transitions during a small time interval $(t,t + dt)$:
%\begin{equation}
%\begin{aligned}
%\Pr \left \{ \begin{array}{c | c} S(t+dt) = s - 1 & S(t) = s \\ I(t+dt) = i+1 & I(t) = i  \end{array} \right \} & = (\beta s i) dt + o(dt), \\
%\Pr \left \{ \begin{array}{c | c} S(t+dt) = s~~~~~  & S(t) = s \\ I(t+dt) = i-1 & I(t) = i \end{array} \right \} & = (\gamma i) dt + o(dt),~\mbox{and} \\
%\Pr \left \{ \begin{array}{c | c} S(t+dt) = s + 1 & S(t) = s \\ I(t+dt) = i~~~~~  & I(t) = i \end{array} \right \} & = \nu (N - s - i) dt + o(dt),~\mbox{and} \\
%\Pr \left \{ \begin{array}{c | c} S(t+dt) = s~~~~~ & S(t) = s \\ I(t+dt) = i~~~~~ & I(t) = i \end{array} \right \} & = 1 - [\beta s i + \gamma i + \nu (N - s - i)]dt + o(dt).
%\end{aligned}
%\label{eqn:sirs}
%\end{equation}
Denote $\{ N_{SI}(t), N_{IR}(t), N_{RS}(t) \}$ as the number of transition events $S \to I$, $I \to R$, and $R \to S$ respectively.
We have
\begin{equation}
%S(t) = s_0 + N_{RS}(t) - N_{SI}(t),~~\mbox{and}~~I(t) = i_0 + N_{SI}(t) - N_{IR}(t).
 \begin{pmatrix} S(t) \\  I(t) \\ R(t) \end{pmatrix} =
 \begin{pmatrix} s_0 \\  i_0 \\ r_0 \end{pmatrix} +
 \begin{pmatrix} -1 & 0 & 1 \\ 1 & -1 & 0 \\ 0 & 1 & -1 \end{pmatrix}
  \begin{pmatrix} N_{SI}(t) \\ N_{IR}(t) \\  N_{RS}(t) \end{pmatrix} ,
\end{equation}
where $(s_0, i_0, r_0)$ is the realized value of $\{S(0), I(0), R(0)\}$.
%Notice that we do not have an one-to-one correspondence between $\{S(t), I(t)\}$ and $\{ N_{SI}, N_{IR}, N_{RS}\}$.
%Therefore, we compute the transition probabilities of the SIRS by summing all possible scenarios:
% \begin{equation}
%\begin{aligned}
%\Pr \left \{ \begin{array}{c | c} S(t) = s & S(0)= s_0 \\ I(t) = i & I(0) = i_0 \end{array} \right \} & =  \sum_{n_{si} = 0}^\infty \sum_{n_{ir} = 0}^\infty \sum_{n_{rs} = 0}^\infty { \Pr \left \{ \begin{array}{c | c} N_{SI}(t) = n_{si} & N_{SI}(0) = 0 \\ N_{IR}(t) = n_{ir} & N_{IR}(0) = 0 \\ N_{RS}(t) = n_{rs} & N_{RS}(0) = 0 \end{array} \right \}}  \\
%& \times 1_{\{ n_{rs} - n_{si} + s_0 = s \}} 1_{\{  n_{si} - n_{ir} + i_0 = i \}}.
%\end{aligned}
%\label{eqn:identity_sirs}
%\end{equation}
%
In this situation, $\{ N_{SI}, N_{IR}, N_{RS}\}$ is a trivariate birth process with birth rates $\beta (s_0 + N_{RS} - N_{SI})^+ (i_0 + N_{SI} - n_{IR})^+$, $\gamma (i_0 + n_{SI} - n_{IR})^+$, and $\nu (r_0 + n_{IR} - n_{RS})^+$.
%\begin{equation}
%\begin{aligned}
%& \Pr \left \{ \begin{array}{c | c} N_{SI}(t+dt) = n_{SI} + 1 & N_{SI}(t) = n_{SI} \\ N_{IR}(t+dt) = n_{IR}~~~~~ & N_{IR}(t) = n_{IR} \\ N_{RS}(t+dt) = n_{RS}~~~~~ & N_{RS}(t) = n_{RS}  \end{array} \right \} = \beta (s_0 + n_{RS} - n_{SE})^+ (i_0 + n_{EI} - n_{IR})^+ dt + o(dt), \\
%& \Pr \left \{ \begin{array}{c | c} N_{SI}(t+dt) = n_{SI}~~~~~ & N_{SI}(t) = n_{SI} \\ N_{IR}(t+dt) = n_{IR} + 1 & N_{IR}(t) = n_{IR} \\ N_{RS}(t+dt) = n_{RS}~~~~~ & N_{RS}(t) = n_{RS}  \end{array} \right \} = \gamma (i_0 + n_{SI} - n_{IR})^+ dt + o(dt),  \\
%& \Pr \left \{ \begin{array}{c | c} N_{SI}(t+dt) = n_{SI}~~~~~ & N_{SI}(t) = n_{SI} \\ N_{IR}(t+dt) = n_{IR}~~~~~  & N_{IR}(t) = n_{IR} \\ N_{RS}(t+dt) = n_{RS} + 1 & N_{RS}(t) = n_{RS}  \end{array} \right \} = \nu (r_0 + n_{IR} - n_{RS})^+ dt + o(dt), \\
%& \Pr \left \{ \begin{array}{c | c} N_{SI}(t+dt) = n_{SI}~~~~~ & N_{SI}(t) = n_{SI} \\ N_{IR}(t+dt) = n_{IR}~~~~~  & N_{IR}(t) = n_{IR} \\ N_{RS}(t+dt) = n_{RS}~~~~~ & N_{RS}(t) = n_{RS}  \end{array} \right \} =
%\begin{array}{c}
%1 - [\beta (s_0 + n_{RS} - n_{SE})^+ (i_0 + n_{EI} - n_{IR})^+ \\
%+ \gamma (i_0 + n_{SI} - n_{IR})^+ \\
%+ ~ \nu (r_0 + n_{IR} - n_{RS})^+ ]dt + o(dt).
%\end{array}
%\end{aligned}
%\label{eqn:mb_sirs}
%\end{equation}
In practice, $\nu$ is much smaller than $\beta \times I(t)$ and $\gamma$. % NB: I tried to put all three terms on the same scale.
Hence, we can assume that during $(0,t)$ each individual can only be infected at most $\upper$ times.
By Theorem \ref{thm:complexity}, we arrive at
%With this assumption, equation (\ref{eqn:identity_sirs}) can be rewritten as
%\begin{equation}
%\begin{aligned}
%\Pr \left \{ \begin{array}{c | c} S(t) = s & S(0)= s_0 \\ I(t) = i & I(0) = i_0 \\ R(t) = r & R(0) = r_0  \end{array} \right \} & =  \sum_{n_{si} = 0}^{mN} \sum_{n_{ir} = 0}^{mN} \sum_{n_{rs} = 0}^{mN} { \Pr \left \{ \begin{array}{c | c} N_{SI}(t) = n_{si} & N_{SI}(0) = 0 \\ N_{IR}(t) = n_{ir} & N_{IR}(0) = 0 \\ N_{RS}(t) = n_{rs} & N_{RS}(0) = 0 \end{array} \right \}} \\
%& \times 1_{\{ n_{rs} - n_{si} + s_0 = s \}} 1_{\{  n_{si} - n_{ir} + i_0 = i \}}.
%\end{aligned}
%\label{eqn:identity_sirs2}
%\end{equation}
%By applying the algorithm proposed in Section \ref{sec:multibirth}, the computation cost to evaluate the full transition probability matrix of the SIRS model is is $\mc{O}(Lm^3N^3)$.
\begin{cor}
The complexity for evaluating the full transition probability matrix of the SIRS model using our method is $\mc{O}(L \upper^3 N^3)$.
\end{cor}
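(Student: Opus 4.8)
The plan is to verify that the SIRS model satisfies the hypotheses of Theorem~\ref{thm:complexity} and then simply read off the stated bound. First I would note that the directed graph of the SIRS model (Figure~\ref{fig:SIRS}) has $m = 3$ compartments and exactly $d = 3$ positive transition rates, namely $\beta S I$ for $S \to I$, $\gamma I$ for $I \to R$, and $\nu R$ for $R \to S$; so the re-parameterization of the preceding subsection applies with the stated matrix $\map$ and birth rates for the trivariate birth process $\{N_{SI}, N_{IR}, N_{RS}\}$. Unlike the SIR and SEIR cases, the graph here has a cycle $S \to I \to R \to S$, so Assumption~\ref{sump:visit} does not hold automatically with $\upper = 1$; this is precisely why the extra factor $\upper^3$ appears.

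The one substantive step is therefore to argue that Assumption~\ref{sump:visit} (Finite loops) holds with a finite $\upper$. I would invoke the epidemiological observation already made in the text: the loss-of-immunity rate $\nu$ is typically much smaller than the removal rate $\gamma$ and than the per-individual infection rate $\beta \times I(t)$, so that between two consecutive observation times an individual cycles through the $S \to I \to R \to S$ loop at most $\upper$ times for some modest constant $\upper$. Granting this, each of $N_{SI}$, $N_{IR}$, $N_{RS}$ is bounded by $\upper N$ (where $N$ is the total constant population $S(t) + I(t) + R(t)$), so the box $\mb{B}$ in the birth-process representation satisfies $\mb{B} \leq \upper \mb{N}$ componentwise.

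Once the boundedness is in place, the conclusion follows directly: apply Theorem~\ref{thm:complexity} with $d = 3$ and with the bound $\mb{X} \leq \upper \mb{N}$, which gives the cost $\mathcal{O}(L \upper^3 N^3)$ for computing the transition probabilities of the trivariate birth process and then, via Equation~\eqref{eqn:YtoX}, the transition probabilities of the SIRS process at the same cost. The main (and really the only) obstacle is the finite-loops justification: it is a modeling assumption rather than a mathematical consequence of the rate specification, since in principle a reinfection cycle of unbounded length has positive probability. I would make clear that $\upper$ is chosen as a practical truncation parameter governed by the relative magnitudes of $\nu$, $\gamma$, and $\beta I(t)$ and the observation spacing, and that the resulting $\mathcal{O}(L \upper^3 N^3)$ bound holds under that truncation, exactly as in the statement of Theorem~\ref{thm:complexity}.
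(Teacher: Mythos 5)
Your proposal is correct and follows essentially the same route as the paper: represent $\{N_{SI}, N_{IR}, N_{RS}\}$ as a trivariate birth process, invoke the smallness of $\nu$ relative to $\gamma$ and $\beta I(t)$ to justify Assumption~\ref{sump:visit} with a finite $\upper$, and apply Theorem~\ref{thm:complexity} with $d=3$. Your explicit remark that the finite-loops bound is a modeling truncation rather than a mathematical consequence of the rates is a fair and slightly more careful articulation of what the paper leaves implicit.
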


\subsection{Comparisons}

\newcommand{\idx}{m}

We use prevalence counts from the plague in Eyam from June 18th to October 20th, 1666 \citep{raggett1982} to compare our recursion method with the SMC algorithm implemented in the \texttt{R} function \texttt{pfilter} \citep[\texttt{pomp} package][]{king2015} and the matrix exponentiation method implemented in the state-of-the-art software \texttt{Expokit} \citep{sidje1998}.
Plague is a deadly infectious disease caused by the bacterium \textit{Yersinia pestis}. It is mainly spread by infected fleas from small animals, particularly rodents, and has killed 100s of millions of people through human history.
In Eyam, only 83 of the original 350 villagers survived at the end of the plague.
The data contain the susceptible and infectious populations $\{ (s_{\idx}, i_{\idx}) \}_{{\idx}=1}^n$ in Eyam at time $\{ t_{\idx} \}_{{\idx}=1}^n$.
%\todo{Note that we used $k$ previously to index birth-dimension}
The log likelihood function is
\begin{equation}
\log l(\beta, \gamma | \{ (s_{\idx}, i_{\idx}) \}_{{\idx}=1}^n) = \sum_{{\idx}=1}^{n-1}{ \log \Pr \left \{ \begin{array}{c | c} S(t_{{\idx}+1}) = s_{{\idx}+1}  & S(t_{\idx}) = s_{\idx} \\ I(t_{{\idx}+1}) = i_{{\idx}+1} & I(t_{\idx}) = i_{\idx} \end{array} \right \}}.
\label{eqn:loglik}
\end{equation}
We compute the log likelihood \eqref{eqn:loglik} under the stochastic SIR model with $\beta = 0.0178$ and $\gamma = 2.73$ \citep[estimated values from the Eyam plague data by][]{ho2016}.

\subsubsection{Comparing to sequential Monte Carlo}

The likelihood calculation is repeated a thousand times and the number of attempted simulant particles for each estimation for \texttt{pfilter} is set as $1000, 2000, 3000$, and $4000$.
For these data and parameter estimates,
\texttt{pfilter} fails to achieve a 100\% success rate for approximating the likelihood.
The success rate is low with $1000$ particles (only 20.1\%), and increases as the number of particles increases (see Table \ref{tab:pomp}).
Filtering failure occurs when all particles become incompatible with the data counts; this can happen frequently when the counts are observed without error.
When filtering succeeds, the approximation is fairly similar to our method, and the standard deviation of these approximations, while sizable, decreases from $1.28$ to $0.97$ as the number of particles increases.
%(Figure \ref{fig:pomp}).
When filtering fails, the approximation is off target by a large margin.
The computation time of \texttt{pfilter} is about $10$ times slower compared to our algorithm for every $1000$ particles (Table \ref{tab:pomp}).
This comparison shows that our recursion method is faster
%and more efficient
than the SMC method.
Moreover, our method is stable while approximations using SMC are very unstable due to a high failure rate.
It is worth mentioning that SMC is known to be an inefficient algorithm for computing the likelihood when the observations have no error.

\begin{table}[h]
\begin{center}
\begin{tabular}{ lcccc }
\hline
Number of particles & 1000 & 2000 & 3000 & 4000 \\
\hline
\hline
Success rate & 20.1\% & 53.2\% & 71.1\% & 78.8\% \\
Average time ratio & 10.14 & 20.11 & 30.09 & 40.1 \\
Standard deviation & 1.28 & 1.11 & 1.06 & 0.97 \\
\hline
\end{tabular}
\end{center}
\caption{Success rates of sequential Monte Carlo method (\texttt{pfilter}) and its average computing time ratios compared to our algorithm.}
\label{tab:pomp}
\end{table}

%\begin{figure}[!h]
%\begin{center}
%\includegraphics[width=0.5\textwidth]{pomp}
%\end{center}
%\caption{The log likelihood approximation of \texttt{pfilter} for the plague data when filtering failure does not happen, and the log likelihood evaluated using our algorithm (dashed red line).}
%\label{fig:pomp}
%\end{figure}

\subsubsection{Comparing to matrix exponentiation method}

To evaluate the log likelihood \eqref{eqn:loglik} via matrix exponentiation, we use the function \texttt{expv} in \texttt{expoRkit}, an \texttt{R}-interface to the Fortran package \texttt{Expokit}, to compute the transition probabilities.
Again, the likelihood calculation is repeated a thousand times.
Our method and matrix exponentiation method produce similar results: the difference is less than $1.53 \times 10^{-7}$.
In term of speed, the average CPU computation time ratio of matrix exponentiation method to our method is $15$ and the standard deviation is $1$.
Therefore, our method is more efficient in computing the likelihood function of the stochastic SIR model than matrix exponentiation method.

%%%%%%%%%%%%%%%
%%%%%%%%%%%%%%%

%\section{Additional applications}
\section{Further statistical applications}
\label{sec:addapp}

%\todo{Lam does not like ``additional applications"}

The ability to efficiently compute the likelihood function makes it straightforward to use maximum likelihood estimators and Metropolis-Hasting algorithms for Bayesian inference.
In this section, we provide two additional extensions that the recursion opens up to us
that were unavailable with previous methods.
The first application is inference via HMC, which requires evaluating the derivative of the posterior distribution with respect to the unknown model parameters.
The second application is accessing model adequacy for the classic SIR model using Bayes factors.

%%%%%%%%%%%%%%%

\subsection{Inference via Hamiltonian Monte Carlo}

HMC is a MCMC method using Hamiltonian dynamics to produce proposals for sampling from a continuous distribution on $\mathbb{R}^d$. Hamiltonian dynamics contain ``location" variables $q$, that are the parameters of interest, and nuisance ``momentum" variables $p$ \citep[see][for an excellent review]{neal2011mcmc}.
In a Bayesian setting, we may treat the negative log of the posterior distribution as the potential energy function:
\begin{equation}
U(q) = - \log \left [  l(q | \mb{D}) \pi(q) \right ] ,
\end{equation}
where $l(q | \mb{D})$ is the likelihood given data $\mb{D}$ and $\pi(q)$ is the prior distribution. On the other hand, researchers often place a multivariate Normal distribution $\mc{N}(0, \Sigma)$ on $p$ and let $p$ be independent of $q$.
Typically, $\Sigma$ is the identity matrix and the corresponding kinetic energy function is
\[
K(p) = \sum_{i=1}^d{\frac{p_i^2}{2}}.
\]
The Hamiltonian is defined as $H(q,p) = U(q) + K(p)$, and the Hamiltonian dynamics follow the following system of partial differential equations:
\begin{equation}
\begin{aligned}
\frac{dq_i}{dt} &= \frac{\partial H}{\partial p_i} = p_i \\
\frac{dp_i}{dt} &= - \frac{\partial H}{\partial q_i} = - \frac{\partial U}{\partial q_i}.
\end{aligned}
\label{eqn:Hamilton}
\end{equation}
The HMC algorithm consists two steps. In the first step, a proposal for $p$ is sampled from $\mc{N}(0, \Sigma)$. In the second step, $(q_t, p_t)$ is obtained from the Hamiltonian dynamics \eqref{eqn:Hamilton} starting at the current value $(q_c,p_c)$. In practice, we may use a leapfrog integration scheme to approximate the solution of \eqref{eqn:Hamilton}. The proposal $(q^*, p^*)$ is set as $(q_t, - p_t)$ and is accepted with probability
\begin{equation}
\min \left [  1, e^{H(q_c, p_c) - H(q^*, p^*)} \right ].
\end{equation}

The ability to efficiently compute the derivatives of the transition probabilities with respect to $q$ opens the possibility of using HMC for studying infectious disease epidemics.
To illustrate, we employ HMC to analyze the $17^{\mbox{\tiny th}}$ century plague in Eyam.
Denote
\begin{equation}
P_{\idx} = \Pr \left \{ \begin{array}{c | c} S(t_{{\idx}+1}) = s_{{\idx}+1}  & S(t_{\idx}) = s_{\idx} \\ I(t_{{\idx}+1}) = i_{{\idx}+1} & I(t_{\idx}) = i_{\idx} \end{array} \right \}.
\end{equation}
Then, the log likelihood function \eqref{eqn:loglik} can be written as
\begin{equation}
\log l(\beta, \gamma | \{ (s_{\idx}, i_{\idx}) \}_{{\idx}=1}^n) = \sum_{{\idx}=1}^{n-1}{ \log P_{\idx}}.
\end{equation}
To satisfy positivity constraints, we opt to use $(u,v) := (\log \beta, \log \gamma)$ as our parameters instead of $(\beta, \gamma)$.
To apply HMC, we derive the derivatives of $\log l$ with respect to $u$ and $v$:
\begin{equation}
\begin{aligned}
\frac{\partial \log l}{\partial u} &= \frac{\partial \log l}{\partial \beta} \frac{\partial \beta}{\partial u} = \sum_{{\idx}=1}^{n-1}{ \frac{P_{\idx}^{(\beta)}}{P_{\idx}}} \beta \\
\frac{\partial \log l}{\partial v} &= \frac{\partial \log l}{\partial \gamma} \frac{\partial \gamma}{\partial v} = \sum_{{\idx}=1}^{n-1}{ \frac{P_{\idx}^{(\gamma)}}{P_{\idx}}} \gamma.
\end{aligned}
\end{equation}
We assume \textit{a priori} that $u \sim \mc{N}(0, 100^2)$ and $v \sim \mc{N}(0, 100^2)$. We explore the posterior distribution of $(u,v)$ using HMC with
%$3000$ iterations and discard the first $600$ iterations.
$10000$ iterations and discard the first $2000$ iterations.
Figure \ref{fig:HMC} visualizes the posterior density of $(u,v)$.
This result is similar to the density estimation using a Metropolis-Hasting algorithm performed in \citet{ho2016},
but at a significant time cost savings.
%\todo{Is this reportable?  Maybe time-to-minESS = Y}
The average effective sample size per unit-time of HMC is 60-fold larger, mostly owing to substantial computational order reduction in the likelihood evaluation under our multivariate-birth process formulation.
%The posterior means of $\beta$ and $\gamma$ are $0.0197$ and $3.22$. The $95\%$ Bayesian credible intervals are $(0.0163, 0.0236)$ and $(2.68, 3.82)$ respectively.
The posterior means of $\beta$ and $\gamma$ are $0.0197$ and $3.22$. The $95\%$ Bayesian credible intervals are $(0.0164, 0.0234)$ and $(2.69, 3.83)$ respectively.

% HMC_3000 (10 leap frog steps)
% Time: 21773.367 (s)
% Effective Size: loggamma (6042.570), logbeta (6627.389)

% HMC_10000 (2 leap frog steps )
% Time: 9367.793 (s)
% Effective Size: loggamma (18090.66), logbeta (18646.69)
% ESS: 1.931155/s - 1.99051/s

% HM (multibirth)
% Time: 13329.577 (s)
% Effective Size: loggamma (18063.88), logbeta (18003.22)
% ESS: 1.355173/s - 1.350622/s

% HM (birth/birth-death)
% Time: 12387.390 (s)
% Effective Size: loggamma (389.1851), logbeta (378.8105)
% ESS: 0.03141785/s - 0.03058033/s

\begin{figure}[!h]
\begin{center}
\includegraphics[width=0.5\textwidth]{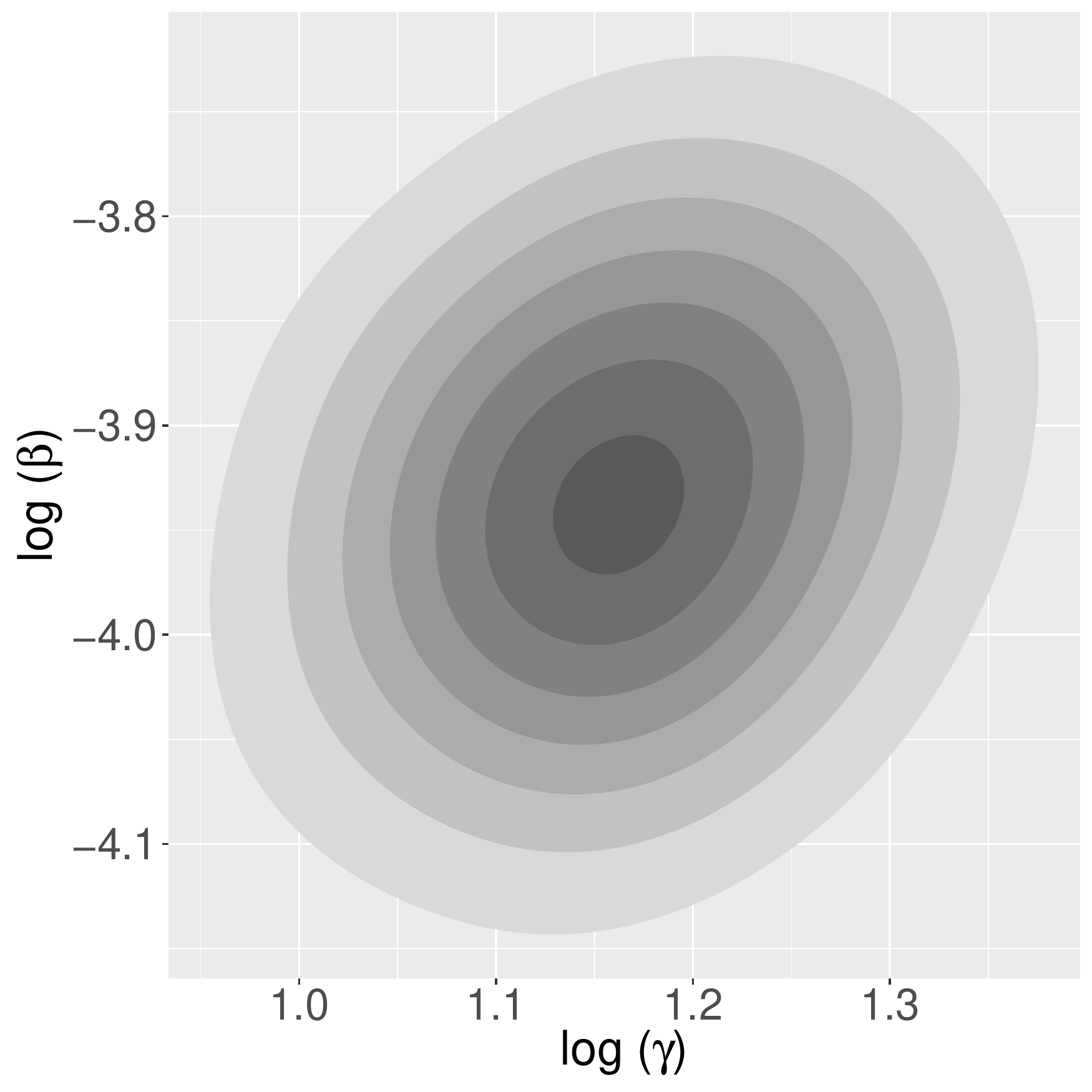}
\end{center}
\caption{Posterior density of infection $\beta$ and removal $\gamma$ rates
%$(\log \beta, \log \gamma)$
of the Eyam plague.}
\label{fig:HMC}
\end{figure}

%%%%%%%%%%%%%%%

\subsection{Adequacy of the classic SIR model}

Although the classic SIR model has been used extensively in practice,  it makes a strong assumption that each infected person can independently transmit the disease to one susceptible person with rate $\beta$. \citet{o2012modelling} argue that this assumption may not be realistic in settings where a saturation effect occurs; that is, a newly infected person contributes less to the overall infection pressure. Therefore, the authors propose to consider a general SIR model with infection rate $\beta S I^{\omega}$. This model is a special case of a more general SIR model where the infection rate is $\beta S^{\alpha} I^{\omega}$ and the removal rate is $\gamma I^{\eta}$ \citep{severo1969generalizations}.

\begin{figure}[h]
\centering
\begin{tikzpicture}
\shadedraw[ball color=black!50, opacity=0.4] (-0.25, 0) circle (0.8) node[opacity=1.0] {\Huge S};
\shadedraw[ball color=black!50, opacity=0.4] (3, 0) circle (0.8) node[opacity=1.0] {\Huge I};
\shadedraw[ball color=black!50, opacity=0.4] (6.25, 0) circle (0.8) node[opacity=1.0] {\Huge R};
%
% Links
%
\draw[->, line width=2pt] (0.75, 0) -- (1.5, 0) node[above] {{\Large $\beta S^{\alpha} I^{\omega}$ }} -- (2, 0);
\draw[->, line width=2pt] (4, 0) -- (4.75, 0) node[above] {{\Large $\gamma I^{\eta}$}} -- (5.25, 0);
\end{tikzpicture}
\caption{A directed graph representation of the general SIR model \citep{severo1969generalizations}.}
\label{fig:generalSIR}
\end{figure}
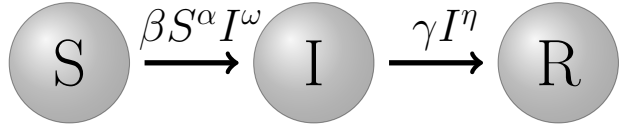

Our computational method does not require any special structure for the infection and removal rates, thus can also be applied to evaluate the likelihood function under these general SIR models.
In particular, $\{ N_{SI}(t), N_{IR}(t) \}$ in the general SIR model from \citet{severo1969generalizations} is a bivariate birth process with birth rates $\beta [(s_0 - N_{SI})^+]^\alpha [(i_0 + N_{SI} - N_{IR})^+]^\omega$ and removal rates $\gamma [(i_0 + N_{SI} - N_{IR})^+]^\eta$.
%whose possible transitions during a small interval $(t,t + dt)$ are
%\begin{equation}
%\begin{aligned}
% \Pr \left \{ \begin{array}{c | c} N_{SI}(t+dt) = n_{SI} + 1 & N_{SI}(t) = n_{SI} \\ N_{IR}(t+dt) = n_{IR}~~~~~ & N_{IR}(t) = n_{IR}  \end{array} \right \} & = \beta [(s_0 - n_{SI})^+]^\alpha [(i_0 + n_{SI} - n_{IR})^+]^\omega dt + o(dt),  \\
% \Pr \left \{ \begin{array}{c | c} N_{SI}(t+dt) = n_{SI}~~~~~ & N_{SI}(t)= n_{SI} \\ N_{IR}(t+dt) = n_{IR} + 1& N_R(t) = n_{IR}  \end{array} \right \} & = \gamma [(i_0 + n_{SI} - n_{IR})^+]^\eta dt + o(dt),  \\
% \Pr \left \{ \begin{array}{c | c} N_{SI}(t+dt) = n_{SI}~~~~~ & N_{SI}(t)= n_{SI} \\ N_{IR}(t+dt) = n_{IR}~~~~~  & N_R(t) = n_{IR}  \end{array} \right \} & =
%\begin{array}{c}
%1- \{ \beta [(s_0 - n_{SI})^+]^\alpha [(i_0 + n_{SI} - n_{IR})^+]^\omega \\
%+ \gamma [(i_0 + n_{SI} - n_{IR})^+]^\eta \}dt + o(dt).
%\end{array}
%\end{aligned}
%\end{equation}
%Therefore, our method is applicable for efficiently computing the likelihood of this general SIR model.
So, we can address some questions about model adequacy of the classic SIR model.
To illustrate, we use Bayes factors to assess if the classic SIR model is appropriate for the Eyam plague dynamics \citep{raggett1982}.
In particular, we test between the general SIR model against its nested sub-models.
Since the tests are between nested models, we apply the Savage-Dickey density ratio to evaluate the Bayes factors \citep{verdinelli1995computing}.
To be specific, if model $\mc{M}_0$ with parameter $(\theta = 0,\phi)$ is nested within model $\mc{M}_1$ with parameter $(\theta, \phi)$ and the prior $p_0(\phi)$ under $\mc{M}_0$ is proportional to the prior $p_1(\theta = 0, \phi)$ under $\mc{M}_1$, then the Bayes factor $B_{01}$ in favor of $\mc{M}_0$ over $\mc{M}_1$ can be estimated via the marginal posterior distribution under $\mc{M}_1$ as follows
\begin{equation}
B_{01} = \frac{p(\theta = 0 | \mb{Y}, \mc{M}_1)}{p(\theta=0 | \mc{M}_1)}
\end{equation}
where $p(\theta = 0 | \mb{Y}, \mc{M}_1)$ and $p(\theta = 0 | \mc{M}_1)$ are marginal posterior and prior densities of $\theta$ evaluated at $0$ under model $\mc{M}_1$.
Here, we posit independent log-normal priors $\ln \mc{N}(0, 100^2)$ for each parameter. Therefore, the condition for applying the Savage-Dickey density ratio is satisfied.
To estimate the posterior distribution under the general SIR model, we use our
MCMC tools.
%a random-walk Metropolis algorithm implemented in the function \texttt{MCMCmetrop1R} of the \texttt{R} package \texttt{MCMCpack} \citep{martin2011}.
%For convenience, we again consider the parameters in log-scale.
%The starting point of $(\beta, \gamma, \alpha, \omega, \eta)$ in log-scale is $(\log(0.0212), \log(3.39), 0, 0, 0)$.
%We run the chain for $1$ million iterations, discard the first $20\%$ part of the chain, and keep one in every ten iterations in the remaining part.
The marginal posterior densities are estimated using kernel density estimation implemented in the R package \texttt{ks} \citep{duong2007ks} and these estimates are then used to compute the Bayes factors via the Savage-Dickey density ratio.
Table \ref{tab:bayesfactors} lists these Bayes factors, and we can see that they strongly support the classic SIR model over the general SIR model.
Although Savage-Dickey density ratio is not the best approximation method for Bayes factors, we can safely ignore its drawback because the evidence supporting the classic SIR model is overwhelming.

\begin{table}[h]
\centering
\begin{tabular} { lc }
\hline
  \multicolumn{1}{c}{Model $\mc{M}_0$} &
  \multicolumn{1}{c}{$\log_{10} B_{01}$} \\
\hline
  $\alpha = \omega = \eta = 1$ & 6.9 \\ % & 8518027 \\
  $\alpha = \omega = 1$        & 4.4 \\ % & 25620   \\
  $\omega = \eta = 1$          & 4.7 \\ % & 46666   \\
  $\alpha = \eta = 1$          & 4.6 \\ % & 37625   \\
  $\alpha = 1$                 & 2.2 \\ % & 175     \\
  $\omega = 1$                 & 2.2 \\ % & 159     \\
  $\eta = 1$                   & 2.6 \\ % & 401     \\
  \hline
\end{tabular}
\label{tab:bayesfactors}
\caption{Bayes factors $B_{01}$ in favor of nested models $\mc{M}_0$ over the general SIR model $\mc{M}_1$ estimated using the Savage-Dickey density ratio.}
\end{table}

% b <- c(8518027, 25620, 46666, 37625, 175, 159, 401)
% log(b,10) # 6.930339 4.408579 4.669001 4.575477 2.243038 2.201397 2.603144

%%%%%%%%%%%%%%%
%%%%%%%%%%%%%%%

\section{Ebola outbreak in Guinea}
\label{sec:Ebola}

Ebola is a contagious viral hemorrhagic fever caused by \textit{Zaire ebolavirus}.
The fatality rate of Ebola is very high, up to $70.8 \%$ \citep{team2014ebola}. The 2014-2015 Ebola outbreak in West Africa is the largest Ebola epidemic in history. In this section, we focus on the outbreak in Guinea from January 2014 to May 2015 (73 weeks). During this period, the World Health Organization (WHO) has convened 5 meetings of the IHR Emergency Committee regarding the Ebola outbreak in West Africa. The first three meetings happened in three consecutive month August, September and October 2014. During the fourth meeting in January 2015, \citet{WHOstatement4th} noted that the number of Ebola cases in Guinea had decreased since the third meeting. \citet{team2015west} also confirmed that the Ebola outbreak has slowed down since October 2014.

We study this change in the trajectory of the outbreak using the number of reported Ebola cases reported weekly in $19$ prefectures across Guinea.
%from January 2014 to May 2015.
To be specific, we are interested in finding evidence that the outbreak in Guinea became less severe after the third WHO meeting and in what regions this happened. These $19$ prefectures are the only places in Guinea where Ebola cases were reported both before and after the third WHO meeting.

\newcommand{\pidx}{p}
\newcommand{\hierarchyMean}{\mathbf{M}}
\newcommand{\hierarchyVariance}{\boldsymbol{\Sigma}}
\newcommand{\hierarchyPrecision}{\mathbf{P}}

We employ a hierarchical and time-inhomogeneous, but still Markovian, SIR model to analyze these data. We re-parameterize the SIR model by replacing the infection rate $\beta$ with the basic reproduction number $R_0 := \beta N /\gamma$. Basic reproduction number is an important concept in epidemiology and can be interpreted as the average number of secondary infections caused by a new infectious individual in a susceptible population. When $R_0 < 1$ the disease will die out, and when $R_0 > 1$ the disease will be able to spread in the population. Researchers often use the value of $R_0$ to measure the severity of an epidemic. To simplify the analysis, we assume that the population of each prefecture is closed.
In other words, we na\"ively assume that the movement between prefectures and the movement in and out of Guinea are negligible.
This assumption is violated if large number of healthy persons or small number of infected persons enter (or leave) a prefecture.
We obtain the total populations of these prefectures from the 2014 census \url{https://en.wikipedia.org/wiki/Prefectures_of_Guinea}.

We use a ``week'' as the unit for time in this analysis.
Letting $t_0$ be the week when the third WHO meeting happened, our model proceeds as follows: the Ebola cases of each prefecture follow a conditionally independent SIR process with parameters $R_{0\pidx}(t)$ and $\gamma_{\pidx}$ for prefecture $\pidx = 1,\ldots,19$. Further, $R_{0\pidx}(t)$ is a time-inhomogeneous function that satisfies
\begin{equation}
\log R_{0\pidx}(t) = \log r_{0\pidx} + 1_{\{t \geq t_0\}} \log \delta_{\pidx},
\end{equation}
where $r_{0\pidx}$ quantifies the basic reproduction number before $t_0$ and $\delta_{\pidx}$ is the scale factor by which the basic reproduction number changes after $t_0$ in prefecture $\pidx$. Moreover, we assume a simple hierarchical prior distribution
\begin{equation}
%\left(
%\begin{array}{l}
%\log r_{0\pidx} \\
%\log \delta_{\pidx} \\
%\log \gamma_{\pidx}
%\end{array}
%\right)
\left(\log r_{0\pidx}, \log \delta_{\pidx},\log \gamma_{\pidx}\right)^{t}
\sim
\mc{N}
\left( \hierarchyMean, \text{diag}(\hierarchyVariance) \right),
\end{equation}
where $\hierarchyMean = (\mu_r, \mu_\delta, \mu_\gamma)$ is the grand-mean on the log-scale across prefectures and $\hierarchyVariance = (\sigma^2_r, \sigma^2_\delta, \sigma^2_\gamma)$ is the variance, with relatively uninformative conjugate hyperpriors
\begin{equation}
\begin{aligned}
\mu_{\phi} & \sim \mc{N} \left( \mathbf{0}, 10^2 \right),~\text{and}~
\sigma^2_{\phi} & \sim \text{InverseGamma} \left( 10^{-3}, 10^{-3} \right),~~ \phi \in \{ r,\delta,\gamma \}.
\end{aligned}
\end{equation}

%
%that there are hierarchical parameters $R_0, \delta$ and $\gamma$ such that
%\begin{equation}
%\log r_0^{(i)} \sim \mc{N}(\log R_0, \sigma_{R_0}^2),~~\log \delta^{(i)} \sim \mc{N}(\log \delta, \sigma^2_\delta), \text{and}~~\log \gamma^{(i)} \sim \mc{N}(\log \gamma, \sigma^2_\gamma)
%\end{equation}
%
%We use the following popular priors for the hierarchical parameters
%\begin{align}
%\log R_0 & \sim \mc{N}(0, 10^2),~~\sigma^2_{R_0} \sim \text{InvGamma}(10^{-3}, 10^{-3}) \nonumber\\
%\log \delta & \sim \mc{N}(0, 10^2),~~\sigma^2_\delta \sim \text{InvGamma}(10^{-3}, 10^{-3}) \nonumber\\
%\log \gamma & \sim \mc{N}(0, 10^2),~~\sigma^2_\gamma \sim \text{InvGamma}(10^{-3}, 10^{-3}).
%\end{align}
%

Of primary scientific interest, $\delta_{\pidx} < 1$ corresponds to a reduction in the basic reproduction number in prefecture $\pidx$, suggesting that the Ebola outbreak slowed down in that prefecture.
However, an important limitation of the data arises, in that field epidemiologists were only able to record the number of new cases between time points.
The number of removals is unknown.
To overcome this limitation, we use a Metropolis-within-Gibbs scheme to sample the posterior distribution of the rate parameters and the number of removals (see Appendix \ref{MHwithinG} for more details).
Because we can compute the joint transition probability matrix between time points, we can draw directly from the full conditional distribution of the removal number, leading to substantially more efficient numerical integration than previous data augmentation approaches that require all sufficient statistics of the completely observed likelihood.
Further, we can speed up this sampling scheme by updating the unknown parameters in each prefecture in parallel. The result is summarized in Figure \ref{fig:mapEbola}, where we plot estimates of the basic reproduction number for each prefecture on the map of Guinea. Yellow circles represent $r_{0\pidx}$ and blue circles represent $r_{0\pidx} \times  \delta_{\pidx}$ when the posterior probability that $\delta_{\pidx} < 1$ is greater than 97.5\% \
%todo{Prob that $\delta > 1$ is $1.0$  Fix.  Prob that $\delta < 1$ is of most scientific interest}.
Note that there is no posterior evidence supporting $\delta_{\pidx} > 1$ for any $p$ because the posterior probability that $\delta_{\pidx} > 1$ is less than $0.5$ for all $p$.
The radius of each circle reports a posterior mean estimate.
We present the posterior means and 95\% Bayesian credible interval of $\hierarchyMean$
%is $(0.075, -0.125, -0.676)$
and of $\hierarchyVariance$
%is $(0.005, 0.022, 0.598)$
in Table \ref{tab:hier}.
%\todo{Good statistician almost always report uncertainty statements.  Maybe a table with posterior means and 95\% BCIs?} \todo{What conclusions can you draw about the hierarchical estimates?}

%\begin{table}[h]
%\begin{center}
%\begin{tabular}{ ccc }
%\hline
%Parameter & Posterior mean & 95\% credible interval  \\
%\hline
%$\mu_r$ & ~0.0747\phantom{0} & (-0.00425\phantom{0}, ~0.151\phantom{00})\\
%$\mu_\delta$ & -0.125\phantom{00} & (-0.236\phantom{000}, -0.00844)\\
%$\mu_\gamma$ & -0.676\phantom{00} & (-1.0396\phantom{00}, -0.319\phantom{00})\\
%$\sigma^2_r$ & ~0.00467 & (~0.000399, ~0.0241\phantom{0})\\
%$\sigma^2_\delta$ & ~0.0224\phantom{0} & (~0.00216\phantom{0}, ~0.0818\phantom{0})\\
%$\sigma^2_\gamma$ & ~0.598\phantom{00} & (~0.284\phantom{000}, ~1.199\phantom{00})\\
%%\medskip
%\hline
%\end{tabular}
%\end{center}
%\caption{Posterior mean and 95\% Bayesian credible interval of hierarchical parameters $\hierarchyMean$ and $\hierarchyVariance$.}
%\label{tab:hier}
%\end{table}

\begin{table}[h]
\begin{center}
\begin{tabular}{ crr@{ (}r@{, }l@{) }c@{}l }
\hline
&
\multicolumn{1}{c}{Posterior} &
\multicolumn{4}{c}{95\% Bayesian} \\
Parameter &
\multicolumn{1}{c}{mean} &
\multicolumn{4}{c}{credible interval}  \\
\hline
$\mu_r$ & ~7.47 $\times 10^{-2}$ && -0.425 & 15.1  && $\times 10^{-2}$ \\
$\mu_\delta$ & -1.25 $\times 10^{-1}$ && -2.36 & -0.0844 && $\times 10^{-1}$ \\
$\mu_\gamma$ & -6.76 $\times 10^{-1}$ && -10.4 & -3.19  && $\times 10^{-1}$ \\
$\sigma^2_r$ & ~4.67 $\times 10^{-3}$ && 0.399 & 24.1  && $\times 10^{-3}$ \\
$\sigma^2_\delta$ & ~2.24 $\times 10^{-2}$ && 0.216 & 8.18  && $\times 10^{-2}$ \\
$\sigma^2_\gamma$ & ~5.98 $\times 10^{-1}$ && 2.84 & 12.0  && $\times 10^{-1}$ \\
%\medskip
\hline
\end{tabular}
\end{center}
\caption{Posterior mean and 95\% Bayesian credible interval of hierarchical parameters $\hierarchyMean$ and $\hierarchyVariance$.}
\label{tab:hier}
\end{table}

%\todo{In table, report values with same \# of sig figs or precision.  Also align better.}

%\todo{Discuss the hierarchical estimates a bit.}

\begin{figure}[h]
\centerline{%
\includegraphics[width=0.7\textwidth]{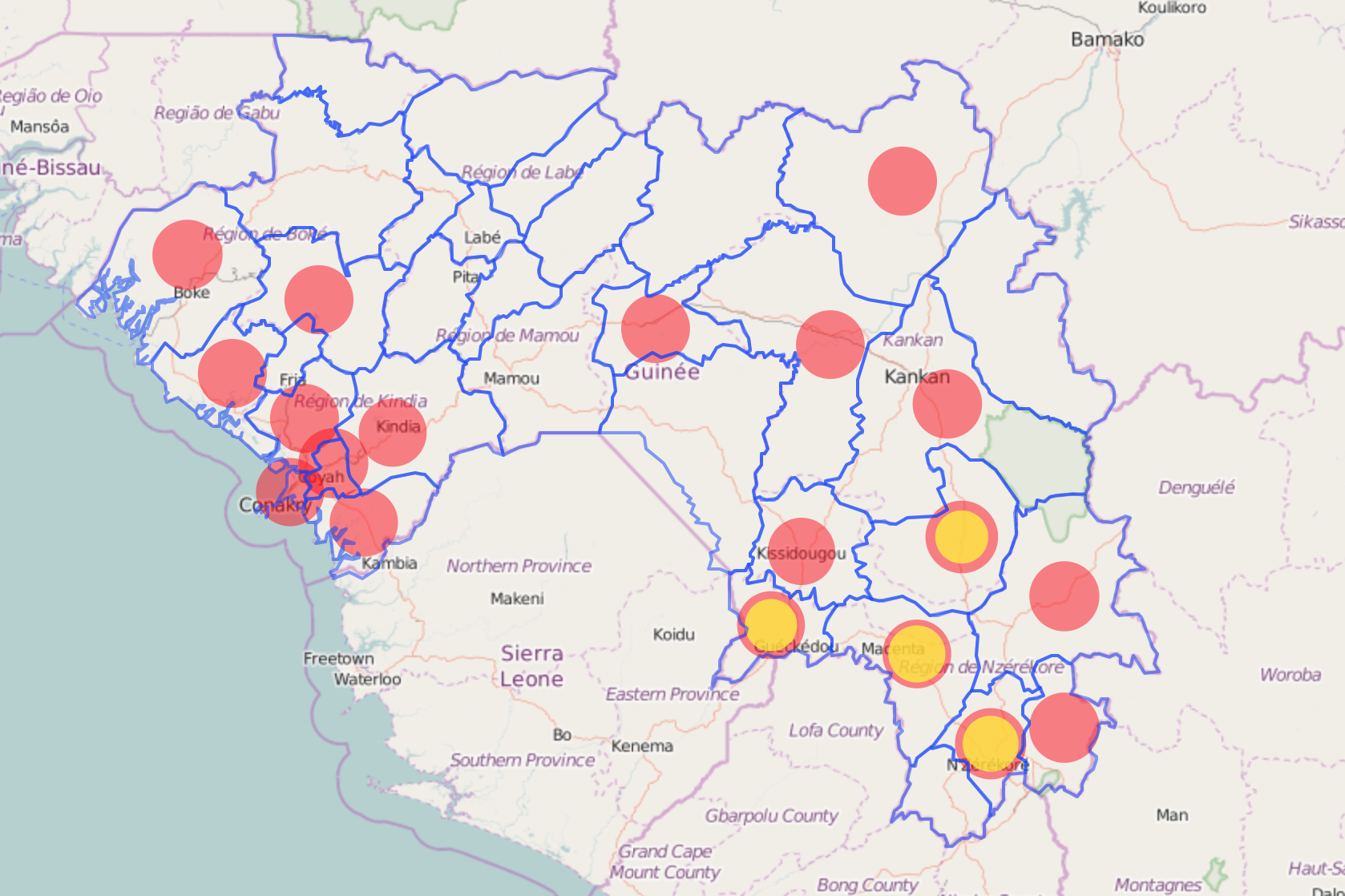}%
}
%\vspace*{-2.75cm}\hspace*{5.5cm}
%\scalebox{1.0}{%
%\begin{tikzpicture}
%
%\draw[fill=trevoryellow, color=trevoryellow, opacity=0.7] (0, 0) circle (0.175) node [right = 0.25, color=black] { $r_{0\pidx}$ };
%\draw[fill=trevorblue, color=trevorblue, opacity=0.7] (0, -0.5) circle (0.175) node [right = 0.25, color=black] { $r_{0\pidx} \times \delta_{\pidx}$ when $ \delta_{\pidx} \ne 1$ };
%
%
%\end{tikzpicture}
%} % scalebox

\vspace*{-1.25cm}\hspace*{2.5cm}
\scalebox{1.0}{%
\begin{tikzpicture}

\draw[fill, color=red, opacity=0.5] (0, 0) circle (0.175) node (node1) [right = 0.25, color=black] {{\color{blue} $r_{0\pidx}$}};
\draw[fill, color=yellow, opacity=0.7] (0, -0.5) circle (0.175) node (node2)  [right = 0.25, color=black] {{\color{blue} $r_{0\pidx} \times \delta_{\pidx}$ when $ \delta_{\pidx} \ne 1$ }};

\end{tikzpicture}
} % scalebox

\caption{Basic reproduction numbers for $19$ prefectures in Guinea before and after the third WHO meeting.}
\label{fig:mapEbola}
\end{figure}

We note that \textit{a posteriori} $\sigma^2_\gamma$ is larger than $\sigma^2_r$ or $\sigma^2_\delta$, with probability approaching $1$.
Therefore, the removal rate $\gamma$ varies across the country more than the reproduction number $R_0$.
The posterior of $(\mu_\delta, \sigma^2_\delta)$ provides evidence for the slowing down of the Ebola outbreak in Guinea after the third WHO meeting.
However, Figure \ref{fig:mapEbola} suggests that the epidemic only slowed down in the Southeast region of Guinea while the epidemic in other regions seems to stay the same.
This finding gives a clearer picture of the change in the trajectory of the Ebola epidemic in Guinea.
It raises a very practical question: what made the outbreak in the Southeast region of Guinea to slow down?
Answering this question could help in efforts to find a more effective method for controlling Ebola epidemic.

% \todo{Should maybe report posterior means of $\log \hierarchyMean$.}

% \todo{Should maybe report posterior means of $\hierarchyVariance$ \ldots or at least the $r$ and $\gamma$ components to show that variability in $\gamma$ is very very small as compared to $r$.}

% \todo{For discussion, sensitivity to n\"aive assumptions of conditionally independent prefecture models}

% \todo{Please address each TODO with as much effort as you'd put into responding to a reviewer; adding a couple of words and commenting out is rarely sufficient}

%%%%%%%%%%%%%%%
%%%%%%%%%%%%%%%

\section{Discussion}
\label{sec:dis}

In this paper, we develop an algorithm to compute the transition probabilities of stochastic compartmental models for inference from surveillance data.
We introduce a new representation for compartmental processes using multivariate birth processes and, through this representation,
avoid the need for continued fraction evaluation to solve the Chapman-Kolmogorov equations.
%With the transition probabilities in hand, likelihood-based statistical inference becomes feasible at scale.
With quadratic complexity in number of transitions between observations, our approach emerges as computationally more efficient than previous methods for the ubiquitous SIR model and applicable to a larger class of compartmental models, such as the SEIR and SIRS models.
Further performance gains through embarrassingly parallel evaluation of the series in Equation \eqref{eqn:invLap} remain open.
%We demonstrate that our method outperforms SMC method in calculating the likelihood.

Since the formulation of the SIR model over 90 years ago, many have viewed its transition probabilities as beyond reach.
We provide some brief intuition on why the Laplace transform of the transition probabilities carries mere quadratic complexity
$\mc{O}(\Delta_S \Delta_R)$.
Viewed as a multivariate birth process that conveniently only increases, the transition probabilities we seek are related to the waiting time until the $\Delta_S$ and $\Delta_R$ births have occurred.
Inter-birth times are independent exponential random variables with potentially unique rates, and we can arrive at the distribution of the total waiting time through taking a convolution of $\Delta_S + \Delta_R$ of these exponentials.
However, the rates depend on the order of births and there are $(\Delta_S + \Delta_R)! / \Delta_S! \Delta_R!$ possible orderings.
Putting these pieces together, the transition probabilities are then exponential sums of multiple convolutions.
We recall several properties of Laplace transformations.  First, they are linear operators, so sums in probability-space remain sums in the transformed space.
Second, convolutions metamorphose into simple multiplication in the transformed space.
These properties leave us with a sum-product expression, suggesting a distribution of the sums within the products.
To gain insight into this dynamic programming, consider the $\Delta_S \times \Delta_R$ lattice graph.
Each lattice path from $(0,0)$ to $(\Delta_S, \Delta_R)$ represents one possible ordering of the birth events.
If we want the transformed probability of ending at $(\Delta_S, \Delta_R)$, there are only two possible one-shorter paths that could have gotten us there, specifically $(0,0)$ to $(\Delta_{S-1}, \Delta_R)$ or $(\Delta_S, \Delta_{R - 1})$.
So, the resulting transformed probability becomes the sum of the two shorter-path transformed probabilities, each multiplied by the Laplace transform of an exponential random variable that has a simple, closed-form expression.
Consequentially, in filling out the whole lattice graph, we need to visit each point once in increasing order and there are only $\Delta_S \Delta_R$ points.

Because differentiation is also a linear operator, our recursion method remains pertinent for computing the derivatives of the transition probabilities with respect to the unknown parameters of the compartmental model.
This feature makes HMC-based Bayesian inference feasible.
As the number of unknown parameters in the compartmental models grows, we suspect HMC to generally outperform Metropolis-Hastings algorithms using
standard
%un-tailored
transition kernels.
Equally noteworthy, our algorithm does not require any specific structure in the birth rates $\lambda^{(\cdot)}_x$ of the multivariate birth processes. Therefore, we can apply our method to other general stochastic epidemic models such as one proposed by \citet{severo1969generalizations}.  This opens the possibility to access the model adequacy of traditional epidemic models.
% However, future development is needed in order to extend to inhomogeneous processes.
It is worth noticing that our method only works for time-homogeneous rates between observations.
When the rates depend on time, the Chapman-Kolmogorov equations in the Laplace domain do not have analytic formulae making the current tool inapplicable. 
Therefore, an important subject for future direction of this work is extending to time-inhomogeneous processes.

Finally, we examine the 2014-2015 Ebola outbreak in Guinea using a marginalized, hierarchical and time-inhomogeneous Markovian SIR model.
By applying our recursion method, we can effectively explore the posterior distribution of the basic reproductive number and removal rate across the country, while simultaneously integrating out the unobserved removed population sizes using a Metropolis-within-Gibbs scheme.
This example highlights the flexibility of a Bayesian framework for direct likelihood-based inference for a compartmental model when
one or more of the compartments are missing or immeasurable, as is common in infectious disease surveillance.
Our results provide evidence for the slowing down of this epidemic in the Southeast region of Guinea.
Several important extensions are immediately obvious.
For example, we assume no error in the reported Ebola case counts, but a simple modification similar to that we accomplished for missing compartments can relax this assumption.

\section*{Acknowledgments}

This work was partially supported by the National Institutes of Health (R01 HG006139, R01 AI107034, KL2 TR000140, P30MH062294, and DP2 OD022614-01) and the National Science Foundation (IIS 1251151 and DMS 1264153).

%%%%%%%%%%%%%%%
%%%%%%%%%%%%%%%

\appendix

\section{Derivatives of the transition probabilities of SIR model}
\label{derivatives}

%In order to apply HMC to SIR model, we need to compute the derivative of the posterior distribution.
We propose an efficient method to evaluate the derivatives of the transition probabilities of the SIR model. Again, we use the bivariate birth presentation for this model. Denote $X = N_{SI}$ and $Y = N_{IR}$, and consider the forward transition probability $P_{xy}(t) = \Pr \{ X(t) = x, Y(t) = y ~|~ X(0) = 0, Y(0) = 0 \}$. The forward Chapman-Kolmogorov equations are:
\begin{equation}
\begin{aligned}
\frac{dP_{xy}(t)}{dt} = & \beta (s_0 - x + 1)^+ (i_0 + x - 1 - y)^+ P_{x-1, y}(t) \\
&+ \gamma(i_0 + x - y + 1)^+ P_{x, y-1}(t) \\
&- [\beta (s_0 - x)^+ (i_0 + x - y)^+ + \gamma(i_0 + x - y)^+] P_{xy}(t).
\end{aligned}
\label{eq:forward2d}
\end{equation}
Let $P^{(\beta)}_{xy}$ be the derivative of $P_{xy}$ with respect to $\beta$. From \eqref{eq:forward2d}, we have
\begin{equation}
\begin{aligned}
\frac{dP^{(\beta)}_{xy}(t)}{dt} = & \beta (s_0 - x + 1)^+ (i_0 + x - 1 - y)^+ P^{(\beta)}_{x-1, y}(t) \\
&+ \gamma(i_0 + x - y + 1)^+ P^{(\beta)}_{x, y-1}(t) \\
&- [\beta (s_0 - x)^+ (i_0 + x - y)^+ + \gamma(i_0 + x - y)^+] P^{(\beta)}_{xy}(t) \\
&+  (s_0 - x + 1)^+ (i_0 + x - 1 - y)^+ P_{x-1, y}(t) \\
&- (s_0 - x)^+ (i_0 + x - y)^+ P_{xy}(t)
\end{aligned}
\label{eq:forward2d_diff}
\end{equation}
Denote $f_{xy}$ and $f^{(\beta)}_{xy}$ be the Laplace transform of $P_{xy}$ and $P^{(\beta)}_{xy}$ respectively. Taking Laplace transform to both sides of \eqref{eq:forward2d_diff}, we have
\begin{equation}
\begin{aligned}
sf^{(\beta)}_{xy}(s) - P^{(\beta)}_{xy}(0) = & \beta (s_0 - x + 1)^+ (i_0 + x - 1 - y)^+ f^{(\beta)}_{x-1, y}(s) \\
&+ \gamma(i_0 + x - y + 1)^+ f^{(\beta)}_{x, y-1}(s) \\
&- [\beta (s_0 - x)^+ (i_0 + x - y)^+ + \gamma(i_0 + x - y)^+] f^{(\beta)}_{xy}(s) \\
&+  (s_0 - x + 1)^+ (i_0 + x - 1 - y)^+ f_{x-1, y}(s) \\
&- (s_0 - x)^+ (i_0 + x - y)^+ f_{xy}(s) .
\end{aligned}
\label{eq:Laplace2d}
\end{equation}
Since $P_{xy}(0) = 1_{\{x=0,y=0\}}$ for all $\beta$, we deduce that $P^{(\beta)}_{xy}(0) = 0$. Therefore, we can compute $f^{(\beta)}_{xy}$ using the following recursion
\begin{equation}
\begin{aligned}
f^{(\beta)}_{00}(s) &= - \frac{s_0 i_0 f_{00}(s)}{s + \beta s_0 i_0 + \gamma i_0} \\
f^{(\beta)}_{xy}(s) &= \frac{\beta (s_0 - x + 1)^+ (i_0 + x - 1 - y)^+ f^{(\beta)}_{x-1, y}(s)}{s + \beta (s_0 - x)^+ (i_0 + x - y)^+ + \gamma(i_0 + x - y)^+} \\
&+ \frac{\gamma(i_0 + x - y + 1)^+ f^{(\beta)}_{x, y-1}(s)}{s + \beta (s_0 - x)^+ (i_0 + x - y)^+ + \gamma(i_0 + x - y)^+} \\
&+ \frac{(s_0 - x + 1)^+ (i_0 + x - 1 - y)^+ f_{x-1, y}(s)}{s + \beta (s_0 - x)^+ (i_0 + x - y)^+ + \gamma(i_0 + x - y)^+} \\
&- \frac{(s_0 - x)^+ (i_0 + x - y)^+ f_{xy}(s)}{s + \beta (s_0 - x)^+ (i_0 + x - y)^+ + \gamma(i_0 + x - y)^+}.
\end{aligned}
\end{equation}
Then, we can compute $P^{(\beta)}_{xy}$ by approximating the inverse Laplace transform using \eqref{eqn:invLap}.
Similarly, we can derive the recursive formulae for $f^{(\gamma)}_{xy}$:
\begin{equation}
\begin{aligned}
f^{(\gamma)}_{00}(s) &= - \frac{i_0 f_{00}(s)}{s + \beta s_0 i_0 + \gamma i_0}  \\
f^{(\beta)}_{xy}(s) &= \frac{\beta (s_0 - x + 1)^+ (i_0 + x - 1 - y)^+ f^{(\beta)}_{x-1, y}(s)}{s + \beta (s_0 - x)^+ (i_0 + x - y)^+ + \gamma(i_0 + x - y)^+} \\
&+ \frac{\gamma(i_0 + x - y + 1)^+ f^{(\beta)}_{x, y-1}(s)}{s + \beta (s_0 - x)^+ (i_0 + x - y)^+ + \gamma(i_0 + x - y)^+} \\
&+ \frac{(i_0 + x - y + 1)^+ f_{x, y-1}(s)}{s + \beta (s_0 - x)^+ (i_0 + x - y)^+ + \gamma(i_0 + x - y)^+} \\
&- \frac{(i_0 + x - y)^+ f_{xy}(s)}{s + \beta (s_0 - x)^+ (i_0 + x - y)^+ + \gamma(i_0 + x - y)^+},
\end{aligned}
\end{equation}
and evaluate $P^{(\gamma)}_{xy}$ using \eqref{eqn:invLap}.

%%%%%%%%%%%%%%%
%%%%%%%%%%%%%%%

\section{Metropolis-within-Gibbs algorithm for inference of Ebola dynamics in West Africa}
\label{MHwithinG}

%\todo{``for Ebola problem'' sounds like you are not taking the science seriously.}
Let $\mb{t}^{(\pidx)} = (t^{(\pidx)}_1, t^{(\pidx)}_2, \ldots, t^{(\pidx)}_{{\idx}_{\pidx}})$ be the times when the counts of Ebola cases in prefecture $\pidx$ are reported. We define $\mb{N}_{SI}^{(\pidx)}$ and $\mb{N}_{IR}^{(\pidx)}$ be the total numbers of new infection and removal events at $\mb{t}_{-1}^{(\pidx)}$ respectively. Here, $\mb{t}_{-j}^{(\pidx)}$ denotes the vector $\mb{t}^{(\pidx)}$ without the $j^{\mbox{\tiny th}}$ coordinate. Notice that we only observe the total of Ebola cases at $\mb{t}^{(\pidx)}$, thus we only know $\mb{N}_{SI}^{(\pidx)}$. So, our unknown parameters are $\{ \mb{N}_{IR}^{(\pidx)}, r_{0\pidx}, \delta_{\pidx}, \gamma_{\pidx} \}$ for all $p$ and $( \hierarchyMean, \hierarchyVariance ) $. We update our parameters using a Metropolis-within-Gibbs algorithm as follows:

\begin{enumerate}
	\item For every $\pidx = 1,\ldots, 19$ in parallel,
	\begin{enumerate}[label=(\roman*)]
		\item For every $j = 2, 3, \ldots, {\idx}_{\pidx}$, we can compute $ \mb{P}(\mb{N}_{IR}^{(\pidx)}(t_j) = n ~|~ \mb{N}_{SI}^{(\pidx)}, \mb{N}_{IR}^{(\pidx)}(\mb{t}^{(\pidx)}_{-j}), r_{0\pidx}, \delta_{\pidx}, \gamma_{\pidx})$ using the forward and backward transition probabilities of the SIR model. Therefore, we sample from $\mb{N}_{IR}^{(\pidx)}(t_j) ~|~ \mb{N}_{SI}^{(\pidx)}, \mb{N}_{IR}^{(\pidx)}(\mb{t}^{(\pidx)}_{-j}), r_{0\pidx}, \delta_{\pidx}, \gamma_{\pidx}$ directly to update the value of $\mb{N}_{IR}^{(\pidx)}(t_j)$.
		\item Then, we update $r_{0\pidx}, \delta_{\pidx}, \gamma_{\pidx} ~|~ \mb{N}_{SI}^{(\pidx)},  \mb{N}_{IR}^{(\pidx)}$ on the $\log$-scale using a random-walk Metropolis-Hasting algorithm with Gaussian proposals or HMC. This step is straight forward because we can evaluate the density $l(r_{0\pidx}, \delta_{\pidx}, \gamma_{\pidx} ~|~ \mb{N}_{SI}^{(\pidx)},  \mb{N}_{IR}^{(\pidx)})$ efficiently.
	\end{enumerate}

	\item Finally, since we choose conjugate priors for the hierarchical parameters, we Gibbs sample $\hierarchyMean$ and $\hierarchyVariance$ .
%	\begin{align*}
%	R_0 ~|~ r_0^{(\cdot)}, \sigma^2_{R_0}  &\sim \log \mc{N} \left (\frac{1}{19}\sum_{i=1}^{19}{\log r_0^{(i)}}, \frac{100 \sigma^2_{R_0}}{\sigma^2_{R_0} + 1900} \right ) \\
%	\delta ~|~ \delta^{(\cdot)}, \sigma^2_{\delta} &\sim \log \mc{N} \left (\frac{1}{19}\sum_{i=1}^{19}{\log \delta^{(i)}}, \frac{100 \sigma^2_{\delta}}{\sigma^2_{\delta} + 1900} \right ) \\
%	\gamma ~|~ \gamma^{(\cdot)}, \sigma^2_{\gamma} &\sim \log \mc{N} \left (\frac{1}{19}\sum_{i=1}^{19}{\log \gamma^{(i)}}, \frac{100 \sigma^2_{\gamma}}{\sigma^2_{\gamma} + 1900} \right ) \\
%	\sigma^2_{R_0} ~|~ r_0^{(\cdot)}, R_0 &\sim \text{InvGamma} \left (10^{-3} + \frac{19}{2}, 10^{-3} + \frac{1}{2}\sum_{i=1}^{19}{(\log r_0^{(i)} - \log R_0)^2} \right ) \\
%	\sigma^2_{\delta} ~|~ \delta^{(\cdot)}, \delta &\sim \text{InvGamma} \left (10^{-3} + \frac{19}{2}, 10^{-3} + \frac{1}{2}\sum_{i=1}^{19}{(\log \delta^{(i)} - \log \delta)^2} \right ) \\
%	\sigma^2_{\gamma} ~|~ \gamma^{(\cdot)}, \gamma &\sim \text{InvGamma} \left (10^{-3} + \frac{19}{2}, 10^{-3} + \frac{1}{2}\sum_{i=1}^{19}{(\log \gamma^{(i)} - \log \gamma)^2} \right ).
%	\end{align*}
\end{enumerate}

Note that we update $\mb{N}_{IR}^{(\pidx)}(t_j)$ sequentially instead of sampling from the joint distribution of $\mb{N}_{IR}^{(\pidx)}$ because sampling sequentially only requires transition probability matrices between counts of Ebola cases, which is much smaller compared to the full transition probability matrix of size $N^2 \times N^2$, where $N$ is the total population, required for sampling from the joint distribution.

%%%%%%%%%%%%%%%
%%%%%%%%%%%%%%%

\bibliographystyle{chicago}
\bibliography{ms}
\end{document}